\newcommand{\cmark}{\ding{51}}%
\newcommand{\xmark}{\ding{55}}%
  \def\ps@pprintTitle{%
  	\let\@oddhead\@empty
  	\let\@evenhead\@empty
  	\def\@oddfoot{}%
  	\let\@evenfoot\@oddfoot}
  \renewcommand{\thetable}{\arabic{table}}
\begin{document}
\title{Quantum finite automata: survey, status and research directions}

\author{Amandeep Singh Bhatia$^\ast$ and  Ajay Kumar \\
\textit{Department of Computer Science, Thapar Institute of Engineering \& Technology, India} \\
E-mail: $^\ast$amandeepbhatia.singh@gmail.com}

\begin{abstract}
Quantum computing is concerned with computer technology based on the principles of quantum mechanics, with operations performed at the quantum level. Quantum computational models make it possible to analyze the resources required for computations. Quantum automata can be classified thusly: quantum finite automata, quantum sequential machine, quantum pushdown automata, quantum Turing machine and orthomodular lattice-valued automata. These models are useful for determining the expressive power and boundaries of various computational features. 
In light of the current state of quantum computation theory research, a systematic review of the literature seems timely. This article seeks to provide a comprehensive and systematic analysis of quantum finite automata models, quantum finite automata models with density operators and quantum finite automata models with classical states, interactive proof systems, quantum communication complexity and query complexity as described in the literature. The statistics of quantum finite automata related papers are shown and open problems are identified for more advanced research. The current status of quantum automata theory is distributed into various categories. This research work also highlights the previous research, current status and future directions of quantum automata models.

\end{abstract}
\maketitle



\theoremstyle{plain}

\newtheorem{thm}{Theorem}

\theoremstyle{definition}
\newtheorem{defn}{Definition}
\newtheorem{exmp}{Example}

\section{Introduction and motivation}

Quantum computing is a winsome field that deals with theoretical computational systems (i.e., quantum computers) combining visionary ideas of Computer Science, Physics, and Mathematics. It concerns with the behaviour and nature of energy at the quantum level to improve the efficiency of computations. Quantum computing relies upon the quantum phenomena of entanglement and superposition to perform operations. 

Feynman \cite{1} initially proposed the idea of quantum computing in 1982 after performing a quantum mechanics simulation on a classical computer. Up until then, quantum computing was thought to be only a theoretical possibility, but research over the last three decades has evolved such as to make quantum computing applications a realistic possibility. In 1994, Shor \cite{2} designed a quantum algorithm for calculating the factor of a large number $\it n$ with space complexity $O(log n)$ and time complexity $O((log n)^2 \ast log log n)$  on a quantum computer, and then perform $O(log n)$  post processing time on a classical computer, which could be applied in cracking various cryptosystems, such as RSA algorithm and elliptic curve cryptography. Through the impetus provided by Shor's algorithm, quantum computational complexity is an exhilarating area that transcends the boundaries of quantum physics and theoretical computer science. 
Theoretical research into quantum computing, along with experimental efforts to construct a quantum computer, has gained a lot of attention. In 1994, Shor \cite{2} introduced the concept of first quantum error correction code by representing the information of one qubit into a highly entangled state of nine qubits. In 1994, Wineland \cite{3} introduced the recognition of Controlled Notgate, using the two lowest energy levels of ions to realize the concept of two qubit states. In 1996, Grover \cite{4} designed an algorithm for finding an element in an unstructured set of size $\it n$   in $\sqrt{n}$ operations approximately. In 1998, Kwiat et al. \cite{5} implemented Grover's algorithm at Los Alamos National Laboratory using conventional optical interferometers.

Quantum finite automata blend quantum mechanics with classical finite automata. It is a theoretical model with finite memory for quantum computers, which plays a vital role in performing real-time computations. The theory of quantum automata has been developed using the principles of classical automata and quantum mechanics. Quantum automata lay down the vision of quantum processor for performing the quantum actions on reading the inputs.

In 1961, Landauer \cite{6} articulated a concept of reversibility in quantum computing. In 1985, Deutsch \cite{7} described a quantum Turing machine and determined the certainty of a universal quantum Turing machine based on the Church-Turing-Deutsch principle.  Furthermore, in 1993, Yao \cite{8} determined that if a particular function existed (computable in polynomial time using a quantum Turing machine), then a polynomial-size quantum circuit could be designed for that function. The field of quantum computation and information processing has subsequently made a significant impact on the academic and research community alike. 

The concept of quantum automata was first proposed by Moore and Crutchfield \cite{9} and Kondacs and Watrous \cite{10} independently. In 1997, Kondacs and Watrous [10] proposed a variant of quantum automata: $\textit{measure-many one-way quantum finite automata}$ (MM-1QFA). In 2000, Moore and Crutchfield \cite{9} proposed another variant of the quantum model: $\textit{measure-once one-way quantum finite automata}$ (MO-1QFA). MO-1QFA produces the output accept or reject after reading the last symbol of an input string; whereas MM-1QFA results in the output reject, accept, or continuation after reading each symbol of the input tape.

This article concentrates on the extensive survey of various quantum finite models. We summarized the existing literature in the form of systematic evolution of various models. A survey on closure properties and the equivalence of QFA models is also conducted. Further, we have shown bibliographic view of quantum finite automata and listed some open problems concerning quantum finite automata in which research can be carried out. 

\subsection{Motivation for research}
\begin{itemize}
	\item Quantum finite automata play a crucial role in quantum information processing theory. Investigation of the power of quantum finite automata is a natural goal. Therefore, this study focused on brief research on various models of quantum finite automata and explored to various directions.
	
	\item We recognized the requirement of comprehensive literature survey after considering progressive research in quantum automata theory. Therefore, we summarized the existing research based on wide and systematic search in this field and presented the research challenges for advanced research.
\end{itemize}
\subsection{Our Contributions}
\begin{itemize}
	\item A comprehensive investigation has been conducted to study various quantum finite automata models, quantum finite automata models with density operators and quantum finite automata models with classical states.	
	\item The aforementioned quantum finite automata models have been compared and categorized based on the closure properties, language recognition power and inclusive relation shown between them. Further, inclusion relationship among different models has been shown on basis of language recognition capability. 
	
	\item Statistical results (yearly publications, top cited papers, list of author’s publications) related QFA papers have been shown.
	\item Future research directions relating quantum finite automata models are presented.
\end{itemize}
\subsection{Related surveys}
Earlier surveys by Qiu et al. \cite{12}, Qiu and Li \cite{50}, Ambainis and Yakaryilmaz \cite{51} have been very innovative, but as the research have consistently grown in the field of quantum finite automata theory, there is a need for a methodical literature survey to evaluate, upgrade, and integrate the existing research presented in this field. Qiu and Li \cite{50} reviewed 
loosely some fundamental models of quantum computing
models from MO-1QFA to quantum Turing machine and partially outlined their
definitions, basic properties and mutual relationships. Ambainis and Yakaryilmaz \cite{51} provided a detailed survey and discussed the new research directions. Although, the data in http://publication.wikia.com/wiki/Quantum automata is not comprehensive and listed the publications till 2014. This research augments the previous surveys and presents a recent methodical literature survey to evaluate and discover the research challenges based on available existing research in the field of quantum automata theory.

\subsection{Review process}
The stages of this literature review include formal definitions of QFA models, comparison on the basis of language recognition power, inclusive relation between them, bibliometric perspective of QFA, investigating the closure properties, equivalence and minimization of QFA models, and exploration of research challenges. We posed the research questions shown in Table 1, and these questions helped us to collect the necessary information from papers in our review process.

\begin{table} [h]
	\centering
	\caption{Research questions and motivation}
	\begin{tabular}{ p{.01\textwidth} p{.35\textwidth}  p{.45\textwidth}}
		\hline
		1. & What is the current status of quantum automata theory? & Various quantum automata have been introduced. Among restricted 1QFAs, LAQFAs earn special attention. 2QFA models have been not usually inspected as compared to 1QFA models. Various algorithms have been developed to empower the measures of QFAs. The research challenge in terms of the research question is discovering the existing research that assessed and compared various QFA models. This study outlined their definitions, computational power, comparison, closure properties, and inclusive relation with other models; various types of existing research have been presented.\\
		2. & How to characterize classes of languages accepted by QFA models? & It provides the knowledge about the review done in this research article. It is mandatory to find out the number\\
		3. & How can we identify the relation between various QFA models? & of research papers in each type of quantum finite automata model, which helps to find the key research areas.\\
		4. & How to identify and classify the closure properties? &  The inclusive relation between QFA models is described on the basis of their language recognition capability. Closure properties of the languages accepted by various QFAs have been investigated. It has become the hotspot area in the quantum information processing. The latest research in quantum automata theory is going toward its connection with algebra and using it to study the power of QFA models. The research challenges in terms of research questions emphasize identifying the present prominence of research in QFA. Different research questions are used to identify the key research areas for future investigation in the field of quantum finite automata theory.\\
		5. & What are the new hot topics for further research in it? & Presently, developing the quantum interactive proof system, the connection between algebra and QFA models, promise problems are hot topics in quantum automata theory.\\
		6. & What is the percentage of publications every year? & Identify the number of papers published related quantum finite automata models until now.\\
		7. & What are the most highly cited papers? & Identify the papers which are highly influential and cited.\\
		8. & Which are the researchers having maximum number of publications relating quantum automata theory? & There are lots of research papers on all sorts of quantum finite automata from renowned research authors. Identify the number of publications of papers by them.\\
		\hline
	\end{tabular}
\end{table}

\subsection{Paper organization}
The organization of rest of this paper is as follows: Section 2 is devoted to preliminaries. In Section 3, a bibliometric perspective of QFA is presented.  In Section 4, we review one-way, 1.5-way, two-way QFA models, one-way and two-way quantum automata with classical states, one-way general QFA. Comparative studies and inclusion relation of these automata models are carried out in sub-section 4.10. Sections 5 present the statistical results of quantum automata related papers, literature survey including closure properties and equivalence of QFA models. In Section 6, variety of open problems concerning QFA models are presented followed by conclusion in Section 7. Note, a glossary of acronyms used in this paper can be found in “Appendix 1”.

\section{Preliminaries }
Before we start our tour, some preliminaries are given in this section. Linear algebra is an essential mathematical tool for quantum mechanics. Linear operators allow us to represent quantum mechanical operators as matrices and wave functions as vectors on some linear vector space. We assume that the reader is familiar with the notation of quantum mechanics; otherwise, reader can refer to quantum computational, quantum computing \cite{12, 62} and classical automata theory \cite{63}. Following, concepts of linear algebra are used in quantum automata theory: 
\begin{itemize}
	\item Linear vector space \cite{12}: It is defined as a set of elements, called vectors. It is closed under addition and multiplication by scalars. If two vector $\ket{\psi}$ and $\ket{\varphi}$ are a part of vector space, then $\ket{\psi}+ \ket{\varphi}$ belongs to vector space. There is also an operation of multiplication by scalars such that if $\ket{\psi}$ is in vector space, then $\alpha\ket{\psi}$ is in the space, where $\alpha$ is a complex scalar. 
	\item Bra-ket notation \cite{13}: In quantum mechanics, the bra-ket notation is a criterion for unfolding quantum states, composed of angle brackets and vertical bars.
	\begin{equation}
	\ket{u}=\begin{bmatrix}
	a_1\\
	a_2\\
	a_3\\
	\end{bmatrix}, \bra{v}=\begin{bmatrix}
	a_1^* &  a_2^* & a_3^*
	\end{bmatrix}, \ket{u}\bra{v}=\begin{bmatrix}
	a_1a_1^* &  a_1a_2^*  & a_1a_3^*\\
	a_2a_1^* &  a_2a_2^*  & a_2a_3^*\\
	a_3a_1^* &  a_3a_2^*  & a_3a_3^*\\
	\end{bmatrix} \end{equation}
	
	$a_i^*$ denotes the complex conjugate of the complex number $a_i$. The ket $\ket{u}$ is a column vector, and its conjugate transpose bra $\bra{v}$ is a row vector. It is also known as Dirac notation.
	\item Qubit: It is defined as a superposition of basis states. It can be represented as a state vector having two basis states labeled $\ket{0}$ and $\ket{1}$. In general, 
	\begin{equation}
	\ket{\psi}=\alpha\ket{0}+\beta\ket{1} \end{equation}
	where $\alpha$ and $\beta$ are complex numbers. The state $\ket{0}$ occurs with probability $|\alpha|^2$ and $\ket{1}$ with probability $|\beta|^2$. Since the absolute squares of the amplitudes equate to probabilities, it follows that $|\alpha|^2+ |\beta|^2=1$. One qubit represents two complex amplitudes ($\alpha$ and $\beta$ ), similarly, $\it n$ qubits represent $2^n$ complex amplitudes.
	\begin{figure}[h]
		\centering
		\includegraphics[scale=0.55]{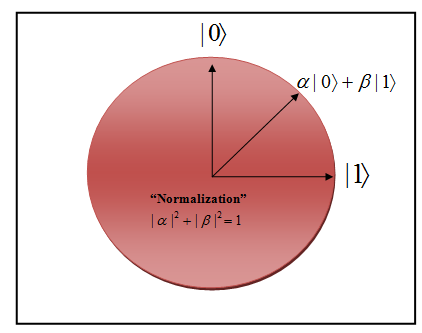}
		\caption{ Normalization of two-level quantum systems \cite{11}}
	\end{figure}
	\item Quantum state \cite{13}: A quantum state $\ket{\phi}$ is a superposition of classical states, 
	\begin{equation}
	\ket{\phi}=\alpha_1\ket{x_1}+\alpha_2\ket{x_2}+...+\alpha_n\ket{x_n} \end{equation}
	where $\ket{x_i}'s$ are classical states for $1 \leq i\leq n$, $\alpha_i's$ are complex numbers called amplitudes and $|\alpha_1|^2+ |\alpha_2|^2+...+|\alpha_n|^2=1$, where $|\alpha_i|^2$ is the squared norm of the corresponding amplitude.  Quantum state can also be seen as $\it n$-dimensional column vector.
	\begin{equation}
	\begin{bmatrix}
	\alpha_1\\
	\alpha_2\\
	...\\
	\alpha_3\\
	\end{bmatrix}
	\end{equation}
	\item Hilbert space: A Hilbert space $\it H$ is a complex vector space. It is a mathematical framework for describing the principles of the quantum system \cite{13} An inner product space on Hilbert space is associated with the inner product of vectors $\braket{u|v}: H\times H\rightarrow C$  satisfying the following properties for any vectors, such that $u,v,w \in H$  and $x,y \in C$  (a set of complex numbers).
	\begin{itemize}
		\item Linearity: $(x\bra{u}+y\bra{v})\ket{w}=x\bra{u}\ket{w}+y\bra{v}\ket{w}$
		\item Symmetric property: $\bra{u}\ket{v}=\bra{v}\ket{u}$
		\item Positive definite property: For any $u \in H, \bra{u}\ket{u}\geq0$ and $\bra{u}\ket{u}=0$ iff $u=0$.
	\end{itemize}
	\item Density matrix \cite{12}: It is an alternate representation of a state in quantum mechanics. Quantum mechanical systems can be in states which cannot be described by wave functions. Such states are called the mixed states and can be represented as a summation of orthonormal bases  $\ket{\Psi}'s$, $\rho=\sum_{i} \rho_i \ket{\Psi} \bra{\Psi}$, where $\rho_i$ is the probability for the system in the state of $\Psi_i$, and $\Psi_i's$ are the diagonal basis for $\rho$. $\rho_i's$ are called eigenvalues of the density matrix $\rho$. The density operator on Hilbert space must satisfy the trace condition, i.e. $\{Tr(\rho)=1|\rho\geq0\}$, where $\it Tr()$ refers to the sum of the diagonal elements of matrices.
	\item Vector norm: A vector norm $(\lVert v\rVert)$  is defined as a maupping from $\it R^n$  to  $\it R$ (Euclidean space) with the following properties:
	\begin{itemize}
		\item $\lVert v\rVert>0$, if $v\neq0$
		\item $\lVert \alpha v\rVert=|\alpha|\lVert v\rVert$, for any $\alpha \in R$
		\item $\lVert u+v\rVert\leq\lVert u\rVert+ \lVert v\rVert $, for any $u,v\in R^n$
	\end{itemize}
	\item Orthogonal and orthonormal vectors \cite{13}: Two vectors $\ket{u}$ and $\ket{v}$ are orthogonal, if they are perpendicular to each other i.e. the inner product of vectors $\bra{u}\ket{v}=0$. It can be defined as a set of vectors $V=\{v_1,v_2,... v_n\}$  are mutually orthogonal if every pair of vectors are orthogonal, i.e. $\bra{v_i}\ket{v_j}=0$, for all $i\neq j$. A set of vectors $\it V$ is orthonormal if every vector in $\it V$ is a unit vector and the set of vectors are mutually orthogonal. 
	\item Unitary evolution: In quantum systems, Markov matrices are replaced by matrices with complex number entries for the time evaluation of probabilistic systems, by maintaining the condition $\sum_{i=1}^{n} |\alpha_i|^2$.  Therefore, consider a quantum system state at time $t_0$: $\ket{\phi(t_0)}=\alpha_1\ket{x_1}+\alpha_2\ket{x_2}+...+\alpha_n\ket{x_n}$ change into the state at time $\it t$: $\ket{\phi'(t)}=\alpha_1'\ket{x_1}+\alpha_2'\ket{x_2}+...+\alpha_n'\ket{x_n}$ where amplitudes $\alpha_1,\alpha_2,...,\alpha_n$  and $\alpha_1',\alpha_2',...,\alpha_n'$ are related by $\ket{\phi'(t)}=U(t-t_0)\ket{\phi(t_0)}$, where $U(t-t_0)$is a time dependent unitary operator such that $(U(t-t_0))^*U(t-t_0)=1$, $\alpha_{ij}$'s are its entries for $1 \leq i,j \leq n$
	\begin{equation}
	\begin{bmatrix}
	\alpha_{11}& \alpha_{12} & ... & \alpha_{1n}\\
	\alpha_{21}&  \alpha_{22} & ... & \alpha_{2n}\\
	... &... & ...&\\
	\alpha_{n1}&  \alpha_{n2} & ... & \alpha_{nn}\\
	\end{bmatrix} \begin{bmatrix}
	\alpha_1\\
	\alpha_2\\
	...\\
	\alpha_n\\
	\end{bmatrix}=\begin{bmatrix}
	\alpha_1'\\
	\alpha_2'\\
	...\\
	\alpha_n'\\
	\end{bmatrix}
	\end{equation}
	and $\sum_{i=1}^{n} |\alpha_i|^2=\sum_{i=1}^{n} |\alpha_i'|^2=1$. Therefore, evaluation of a quantum system at any time must be unitary \cite{13}.
	\item Language recognition with unbounded and bounded error: A language $\it L$ is said to be accepted by quantum finite automata with cut-point $\lambda$ if $\forall x \in L$, the probability of acceptance is greater than $\lambda$ and  $\forall x \notin L$   the probability of acceptance is at most $\lambda$. A language is said to be accepted with bounded error if there exists $\varepsilon>0$ such that $\forall x \in L$, the probability of acceptance is greater than $\lambda+\varepsilon$ and $\forall x \notin L$  the probability of acceptance is less than $\lambda-\varepsilon$   A language $\it L$ is accepted by quantum finite automata with cut-point $\lambda$ and without a bounded error, then $\it L$  is accepted with an unbounded error \cite{12}. 	
\end{itemize}

\section{A Bibliometric Perspective of QFA}
Sixty-two Journal papers, 23 Conference papers and 3 Technical reports have been evaluated in this review systematically. Fig. 2 shows the number of papers published from year 1997 to 2018 in field of Quantum finite automata, where publication year on the x-axis and the number of papers published on the y-axis for papers in review. 

Sixty percentage of papers are journal papers, 29\% of papers are conference proceedings, 5\% are workshop papers, 4\% of papers are technical reports and 2\% are Thesis. 

\begin{longtable}{c l c} 
	\caption{List of sources publishing the top 15 quantities of articles of quantum automata}
	\label{variability_impl_mech}
	\endfirsthead
	\endhead
	\hline
	\textbf{Ranking} & \textbf{Publication source} & \textbf{Quantity} \\ 
	\hline
	1 & Theoretical Computer Science
	& 38 \\ 
	2 & Lecture Notes In Computer Science & 28 \\
	3 & International Journal of Foundations of Computer Science & 17\\
	4 & International Journal of Theoretical Physics & 11 \\
	5 & Information And Computation & 6\\
	6 & Journal of Computer And System Sciences & 4\\ 
	7 & Fundamenta Informaticae & 5 \\
	8 & Rairo Theoretical Informatics And Applications & 5 \\
	9 & Natural Computing & 5 \\
	10 & Quantum Information Computation & 5\\
	11 & Journal of Statistical Physics & 4\\
	
	12 &	Physical Review E & 4\\
	
	13 & Information Processing Letters & 4 \\
	
	14 & Physical Review A & 4 \\
	15 & Information Sciences & 3\\
	\hline
	
\end{longtable}
\begin{figure}[h]
	\centering
	\includegraphics[scale=0.65]{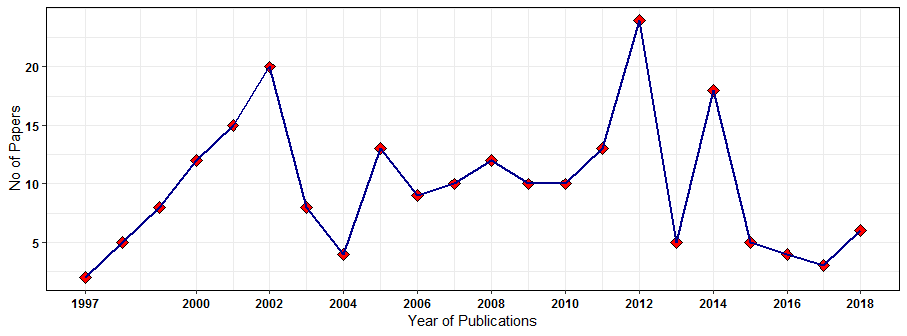}
	\caption{Number of papers per year in a review}
\end{figure}
Table 2 shows the publication sources which have published the top 15 quantities of contributions relating quantum finite automata. Fig 3 depicts the number of publication by various authors. There is a lot of literature on all sorts of quantum finite automata, particularly from the University of Latvia (Ambainis, Freivalds, Yakaryilmaz, etc). Table 3 depicts the top 15 highly influential and cited paper related quantum automata (source: https://scholar.google.co.in, accessed Jan 8, 2019).

\begin{figure}[h!]
	
	\includegraphics[scale=0.7]{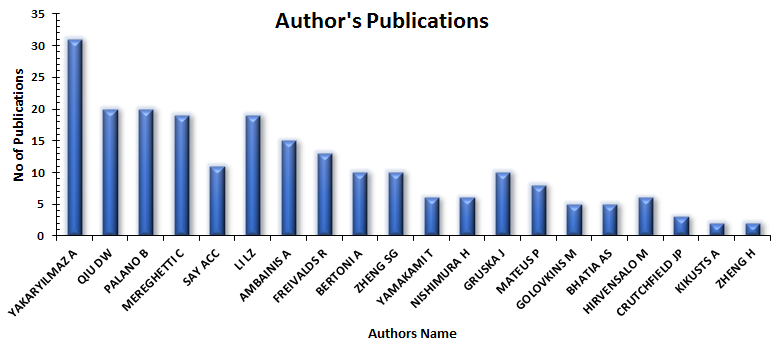}
	\caption{Author's publications}
\end{figure} 

\begin{longtable}{{p{1cm} p{6cm} p{5.5cm} p{2.3cm}}}
	\caption{Top 15 cited papers}
	\label{variability_impl_mech}
\endfirsthead
\endhead
\hline
	\textbf{Year} & \textbf{Title} & \textbf{Journal/
		Proceedings} & \textbf{Number of citations} \\ 
	\hline
	2000 & Quantum automata and quantum grammars\cite{9} & Theoretical Computer Science & 430\\
	1997 & On the power of quantum finite state automata \cite{10}& Foundations of Computer Science & 429 \\ 
	1998 &	1-way quantum finite automata: Strengths, weaknesses and generalizations \cite{15} &	Foundations of Computer Science &	307\\
	2002 &	Characterizations of 1-way quantum finite automata \cite{17} &	SIAM journal on computing &	182 \\
	2002&	Dense quantum coding and quantum finite automata \cite{53} &	Journal of the ACM	& 193 \\
	2000 &	Two-way finite automata with quantum and classical states \cite{24} &	Theoretical Computer Science &	176 \\
	2003 &	Quantum computing: 1-way quantum automata \cite{30} &	Developments in Language Theory &	91\\
	1999 &	Undecidability on quantum finite automata \cite{14} & ACM Symposium on Theory of Computing & 69\\
	2000 &	On the class of languages recognizable by 1-way quantum finite automata \cite{16} & Theoretical Aspects of Computer Science & 59\\
	2006 &	Algebraic results on quantum automata \cite{20} &	Theory of computing systems &	58\\
	2010 &	Succinctness of Two-way probabilistic and quantum finite automata \cite{26} &	Discrete Mathematics and Theoretical Computer Science &	60\\
	1999 & Probabilities to accept languages by quantum finite automata \cite{95} & International Computing and Combinatorics Conference & 50\\
	2012 & One-way finite automata with quantum and classical states \cite{92} & Languages alive (lecture notes in computer science) & 31\\
	2013 & State succinctness of two-way finite automata with quantum and classical states \cite{82} & Theoretical Computer Science &	25 \\
	2014 & On the state complexity of semi-quantum finite automata \cite{85} & RAIRO-Theoretical Informatics and Applications & 25 \\
	\hline		
\end{longtable}
Table 4 lists the Journals and Conferences publishing quantum finite automata related research, where (J, Journal; C, Conference; S, Symposium; W, Workshop; N, number of studies reporting related research as prime study; \#, total number of articles investigated). We observed that conferences like IEEE Symposium on Foundations of Computer Science (FOCS), ACM Symposium on Theory of Computing (STOC), International Conference on Computing and Combinatorics Conference (COCOON), Quantum Computation and Learning Workshop contribute a large part of research articles. Premier journals like Theoretical Computer Science, Journal of Theoretical Physics, Journal of Foundations of Computer Science, RAIRO: Theoretical Informatics and Applications contributed significantly to our review area.

\begin{longtable}{p{10.5cm} p{2cm} p{1cm} p{1cm}}
	\caption{Journals/Conferences reporting most related research}
	\label{variability_impl_mech}
	\endfirsthead
	\endhead
	\hline
	\textbf{Publication source} & \textbf{J/C/S/W} & \textbf{\#} & \textbf{N} \\ 
	\hline
	IEEE International Symposium on Foundations of Computer Science (FOCS) &	S &	4 &	4 \\
	Theoretical Computer Science &	J &	23 &	15\\
	ACM Symposium on Theory of Computing (STOC)&	S &	3 &	3\\
	International Computing and Combinatorics Conference (COCOON) &	C &	5 &	3 \\
	Journal of Theoretical Physics &	J &	7 &	2 \\
	Journal of Foundations of Computer Science & 	J &	5 &	4\\
	Proceeding of Quantum Computation and Learning Workshop &	W &	4 &	2 \\
	RAIRO: Theoretical Informatics and Applications &	J&	7 &	7\\
	Symposium on Fundamentals of Computation &	S &	3 &	1\\
	SIAM Journal on Computing  &	J &	2 &	1 \\
	Quantum Information and Computation	 & J &	3 &	3 \\
	International Conference on Developments in Language Theory (DLT) &	C &	4 &	4\\
	Symposium on Fundamentals of Computer Theory (FCT, Springer) &	S &	3 &	1 \\
	Journal of Natural Computing &	J &	5 &	2 \\
	Information Processing Letters &	J &	5 &	5\\
	International Symposium on Algorithms and Computation (ISAAC, Springer) &	S &	1 &	1 \\
	arXiv: preprint &	J &	18 & 8\\	
	\hline	
\end{longtable}
\section{Quantum Computing Models}
A quantum finite automaton is a quantum counterpart of a classical finite automaton. In quantum finite automaton, quantum actions are performed on reading the symbols from the inputs tape. Quantum finite automata can be classified into one-way quantum finite automata, 1.5-way quantum finite automata, two-way quantum finite automata, quantum sequential machines, quantum pushdown automata, quantum Turing machine and orthomodular lattice-valued automata. These automata act as a model of quantum processors. In this paper, we address the quantum finite automata models with finite memory \cite{9, 10, 14, 20, 30, 40}; quantum automata models with density operators \cite{18, 41, 42, 43} and quantum automata with both quantum and classical states \cite{24, 44, 92}.

\begin{defn} \cite{9}
	A Quantum finite automaton is defined as real-time quantum automaton, where
	\begin{itemize}
		\item An input alphabet {\it $\Sigma$}, 
		\item Hilbert space ${\it H}$, an initial state vector $s_{init} \in H$ with $|s_{init}|^2=1$, 
		\item A subspace $H_{accept}\subset H$ and an operator $P_{acc}$ which project on it, 
		\item A unitary transition matrix $U_a$ for $\forall \in \sigma$  . 
	\end{itemize}
\end{defn}	

The quantum language recognized by quantum finite automaton as a function $f_{QFA}(w)=|s_{init}U_w P_{acc}|^2$, where $U_w=U_{w_1}U_{w_2}...U_{w_{|w|}}$. The process of computation of input string $\it w$ starts with the initial vector, apply the unitary matrix of each symbol and measure the probability by applying projection operator such that the resultant state is in subspace $H_{accept}$.  

\subsection{One-way quantum finite automata} 
\begin{figure}[h]
	\centering
	\includegraphics[scale=0.7]{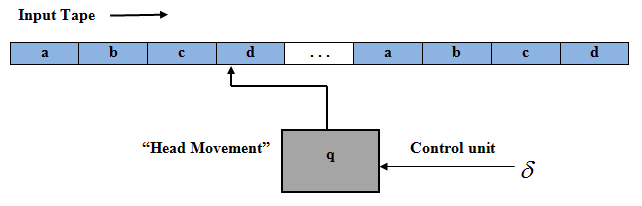}
	\caption{ One-way quantum finite automata}
\end{figure}
\begin{defn} \cite{11}
	A one-way quantum finite automaton (1QFA) is defined as a sextuple ${\it (Q,\Sigma, \delta, q_0, Q_{acc}, Q_{rej})}$, where 
	\begin{itemize}
		\item {\it Q} is a finite set of states, 
		\item {\it $\Sigma$} is an alphabet,
		\item $\delta$ is a transition function ${\delta}:Q\times \Sigma \times Q \rightarrow C$, where $\it C$ is a complex number.
		\item {\it $q_0$} is an initial state,
		\item $Q_{acc}\subset Q$ and $Q_{rej} \subset Q$ represent the set of accepting and rejecting states. 
	\end{itemize}
\end{defn}
In one-way QFA, the R/W head move only in the right direction for reading symbols from the input tape. During reading the input symbol from the input tape, it produces a superposition of states. Based on the measurement, one-way quantum finite automata are divided into measure-once and measure-many quantum finite automata. 

\begin{defn} \cite{9}
	MO-1QFA is defined as a quintuple ${\it (Q,\Sigma, \delta, q_0, Q_{acc})}$, where 
	\begin{itemize}
		\item {\it Q} is a set of states, 
		\item {\it $\Sigma$} is an input alphabet,
		\item {\it $q_0$} is a starting state,
		\item The transition function ${\it \delta}$ is defined by ${\it Q\times\Sigma\times Q \rightarrow C}$, it satisfy the unitary condition: 
		
		$$\sum \limits_{p\in Q}^{\forall (q_1,\sigma), (q_2,\sigma)\in Q\times\Sigma}	
		\overline{\delta(q_1,\sigma,p)}\delta(q_2,\sigma,p)=
		\left\{
		\begin{array}{ll}
		
		1~ q_1=q_2\\
		0~ q_1\neq q_2\\
		
		\end{array} \right \}$$

		\item ${\it Q_{acc}}$ is a set of accepting states.
	\end{itemize}
	
\end{defn}
The computation procedure of MO-1QFA consists of an input string {\it x=$\sigma_1\sigma_2...\sigma_n$}. For each symbol,  transition function is represented as unitary matrix $U(\sigma)$ such that $U(\sigma)(i,j)=\delta(q_j,\sigma,q_i)$  The R/W head reads {\it x} symbol by symbol from left to right side, and unitary matrices $U(\sigma_1),U(\sigma_2),...,U (\sigma_n)$ of each symbol are performed on the current state, beginning with $q_0$. $\delta$ is indicated by set of unitary matrices  $\{V_\sigma\}_{\sigma \in \Sigma}$, where $V_\sigma$ is a unitary evolution of MO-1QFA, defined as: $V_\sigma(\ket{q})=\sum\limits_{q'\in Q}{(q,\sigma,q')} \ket{q'}$. At the end, projective measurement is executed by projection operator on the final state in order to check whether the input string is accepted or rejected. It allows the measurement to be made only after reading the last symbol of the input string. If after reading the last symbol, a superposition $\psi=\sum\limits_{q_i\in Q_{acc}} \alpha_i \ket{q_i}+\sum\limits_{q_j\in Q_{rej}} \beta_j \ket{q_j}$ is observed, then the input string is accepted with $\sum {\alpha_i^2}$ and rejected with $\sum {\beta_j^2}$. The probability of acceptance for MO-1QFA is calculated as
$$P(x)=\lVert P_{acc}U(\sigma_n)...U(\sigma_2)U(\sigma_1)\ket{q_0}\lVert^2$$

\begin{defn} \cite{17}
	MM-1QFA is defined as a sextuple ${\it (Q,\Sigma, \delta, q_0, Q_{acc}, Q_{rej})}$, where 
	\begin{itemize}
		\item {\it Q} is a finite set of states, $Q=Q_{acc}\bigcup Q_{rej}\bigcup Q_{non}$, where ${Q_{acc},Q_{rej}, Q_{non}}$ denotes the accepting, rejecting and non-halting set of states states correspondingly. 
			\item {\it $\Sigma$} is an input alphabet,
			\item {\it $q_0$} is a starting state,
			\item The transition function ${\it \delta}$ is defined by ${\it Q\times\Sigma \cup \{\#, \$\} \times Q \rightarrow C}$, which represents the amplitudes flows from one state to other after reading the symbol from the input tape. It must satisfy the unitary condition.
			
		\end{itemize}
		
	\end{defn}
	The computation process of MM-1QFA consists of an input string {\it x=$\#\sigma_1\sigma_2...\sigma_n\$$}. The R/W head reads {\it x} from left-end marker \# to right-end marker \$ and transition function corresponding to each symbol is performed. The whole Hilbert space is divided into three subspaces: $E_{non}=span\{\ket{q}:q\in Q_{non}\},E_{acc}=span\{\ket{q}:q\in Q_{acc}\}$ and $E_{rej}=span\{\ket{q}:q\in Q_{rej}\}$  correspondingly there are three projectors ${\it P_{non},P_{acc}, P_{rej}}$ on to the three subspaces. After each transition, MM-1QFA measures its state with reference to observable ${\it E_{non}\bigoplus E_{acc} \bigoplus E_{rej}}$. If the observed state is in $E_{acc}$ or $E_{rej}$  subspace, then MM-1QFA accepts or rejects the input string respectively; otherwise, the computation process continues. Therefore, after every step of measurement, the superposition of states ends with measured subspace. 
			Due to non-zero probability of halting of MM-1QFA, it is convenient to have a path of the aggregate rejecting and accepting probabilities. Hence, the state of automata is denoted as $(\ket{\phi}, P_{acc}, P_{rej})$, where ${\it P_{acc}, P_{rej}}$ are the aggregate probabilities of accepting and rejecting. The transition $\delta$ is defined as: $P_{non}\ket{\phi^{'}}, P_{acc}+\lVert P_{acc}\ket{\phi^{'}} \lVert^2, P_{rej}+\lVert P_{rej}\ket{\phi^{'}}$, where $\ket{\phi^{'}}=V_{\sigma}\ket{\phi}$. The probability of acceptance for MM-1QFA is calculated as $P(x)=\sum_{k=1}^{n+1}\lVert P_{acc}U(\sigma_k)\prod_{i=1}^{k-1}(P_{non}U(\sigma_i))\ket{q_0}\lVert^2$.

			\subsection{Latvian quantum finite automata}
			Ambainis et al. \cite{20} introduced a generalized version of MO-1QFA named Latvian quantum finite automata (LQFA). It works similarly as MO-1QFA except that the transition function $(\delta)$, is a combination of projective measurement and unitary matrix \cite{12}. This alteration increases the power of LQFA for acceptance of languages.
			\begin{defn} \cite{12}
				LQFA is defined by septuple ${\it (Q,\Sigma,\{A_\sigma\}, \{P_\sigma\}, q_0, Q_{acc}, Q_{rej})}$, where 
				\begin{itemize}
					\item {\it Q} is a finite non-empty set of states, 
					\item {\it $\Sigma$} is an input alphabet and $\Gamma=\Sigma$ \{\#, \$\}, where \$ and \# are right and left-end markers respectively,
					\item $A_\sigma$ denotes unitary matrices for each symbol, and $P_\sigma$ is defined as a set of orthogonal subspaces,
					\item {\it $q_0$} is an initial state,
					\item {\it $Q_{acc}$} and $Q_{rej}$ are the set of accepting and rejecting states,		
				\end{itemize}
			\end{defn}
			LQFA is closely related to probabilistic finite automata \cite{20}. LQFA \cite{12} accepts the language with a bounded error if the language is a Boolean combination of the form $G_0b_1G_1...b_kG_k$ where $b_i's$ and $G_i's$ represents the letter and group languages respectively. Therefore, it can identify a proper subset of regular languages with bounded error acceptance mode \cite{20}. The computing process of LQFA starts with an initial state $q_0$ for input string ${\it x=\#\sigma_1\sigma_2...\sigma_n\$}$. On reading each symbol $\sigma \in \Sigma$, unitary matrix and projective measurement is performed. On the basis of measurement $P_\$$ of right-end marker, the input string is said to be accepted or rejected. Therefore, $P_\$=E_{acc}\bigotimes E_{rej}(E_{acc}=span\{\ket{q}:q \in Q_{Acc})$ and $(E_{rej}=span\{\ket{q}:q \in Q_{rej})$.
			
				\begin{table} [h]
				\centering
				\caption{Comparison of MO-1QFA and MM-1QFA}
				\begin{tabular}{ |p{4.4cm}|p{5.5cm}|p{5.5cm}| }
					\hline
					\textbf{Model properties} 
					& {\textbf{MO-1QFA}} 
					& {\textbf{MM-1QFA}}\\
					
					\hline
					\textbf{Proposed by} & Moore and Crutchfield \cite{9}
					& Kondacs and Watrous \cite{10} \\ 	
					\hline
					\textbf{Computation process} & Measurement is allowed only once after reading the last symbol of the input string.   &  Measurement is allowed after reading of each symbol by R/W head from the input tape.  \\ 	\hline
					\textbf{Measurement result} & Accept/ reject. & 		
					Accept/ reject/ continuation. \\ 	\hline
					\textbf{Language acceptance} & It can accept only group languages. & It is strictly more powerful than MO-1QFA for bounded error acceptance.  \\
					\hline
				\end{tabular}
			\end{table}

\subsection{QFA with control language}
In this model, the measurement is performed after reading each symbol from the input tape and R/W head is allowed to move only in right direction of input tape. 
\begin{defn} \cite{30}
	CL-1QFA is defined as a quintuple ${\it (Q,\pi,\{W(\sigma)\}_{\sigma \in\Sigma}, O, L)}$, end-marker $\$ \notin \Sigma$ and $\Sigma\cup\{\$\}$, where
		\begin{itemize}
			\item {\it Q} is a finite set of states, 
			\item {\it $\pi\in C^{1\times n}$} is the initial amplitude satisfying  satisfying $\lVert \pi \rVert^2=1$,
			\item $U(\sigma)\in C^{n\times n}$ is a unitary matrix, 
			\item $\it O$ is an observable on $C^{1\times n}$, if $C=\{c_1,c_2,...,c_s\}$ is a set of all possible results of measurements of $\it O$ and $\{P(c_i):i=1,2,...n\}$  denotes the projector onto the eigenspace corresponding to $c_i$, for all $c_i \in C$,
			\item $L\subseteq C^\ast$ is a regular language (control language),		
		\end{itemize}
	\end{defn}
	CL-1QFA allows an arbitrary projective measurement on the Hilbert space spanned by $\it Q$. The computation procedure is different from MO-1QFA and MM-1QFA models. An observable $\it O$ is considered with set of possible results $C=\{c_1,c_2,...,c_s\}$.   On any given input string $\it x$, the computation displays a sequence $y \in C^\ast$ of measurement results with a certain probability $p(y\rvert x)$, the input string is accepted iff $\it y$ belongs to a fixed regular control language $L\subseteq C^\ast$.

	The computation procedure of CL-1QFA consists of two steps: Firstly, unitary matrix $U(\sigma)$ is applied to the current state $\ket{\phi}$ and produces a new state such that $\ket{\psi}=U(\sigma)\ket{\phi}$. Secondly, the resultant state is observed by an observable $\it O$, which produces a result $c_k$   with probability $P=\lVert(c_k)\ket{\psi}\rVert$, and state is collapsed to $\frac{p(c_k)\ket{\psi}}{\sqrt{P}}$. Therefore, the computation on input string $x_1,x_2,...x_n$ leads to a sequence $y_1y_2...y_{n+1} \in C^\ast$, where $x_{n+1}=\$$ with probability $P(y_1y_2...y_{n+1}\rvert x_1,x_2,...x_n\$)$  is calculated as $P(y_1y_2...y_{n+1}\rvert x_1,x_2,...x_n\$)=\lVert \prod_{i=1}^{n+1}P(y_{n+2-i})U(x_{n+2-i}) \ket{q_0}\rVert ^2$,  Thus, the probability of acceptance is calculated as 
	$$P(x)=\sum_{y_1y_2,...y_{n+1}}P(y_1y_2...y_{n+1} x_1,x_2,...x_n\$)$$.
	
	\subsection{Ancilla QFA}
	In QFA, each transition must be unitary, which limits its computational power. Therefore, Paschen \cite{40} introduced a different 1QFA with ancilla qubits (AQFA) in order to avoid the restriction of unitary transitions. It can be done by adding an output alphabet to the MO-1QFA in Definition 2.
	
	\begin{defn}
		An AQFA is defined as a septuple  ${\it (Q,\Sigma, \Omega, \delta, q_0, Q_{acc}, Q_{rej})}$, where 
		\begin{itemize}
			\item {\it Q} is a finite set of states, $Q=Q_{acc}\bigcup Q_{rej}\bigcup Q_{non}$, where ${\it Q_{acc},Q_{rej}, Q_{non}}$ denotes the accepting, rejecting and non-halting set of states states correspondingly. 
				\item {\it $\Sigma$} is an input alphabet,
				\item $\Omega$ is a output alphabet,
				\item {\it $q_0$} is a starting state,
				\item The transition function ${\it \delta}$ is defined by ${\it Q\times\Sigma\times Q \times \Omega \rightarrow C}$, it satisfy the unitary condition: 
				
				$$\sum \limits_{p\in Q,~\omega\in \Omega}^{\forall (q_1,\sigma), (q_2,\sigma)\in Q\times\Sigma}	
				\overline{\delta(q_1,\sigma,p,\omega)}\delta(q_2,\sigma,p,\omega)=
				\left\{
				\begin{array}{ll}
				
				1~ q_1=q_2\\
				0~ q_1\neq q_2\\
				
				\end{array} \right \}$$
			\end{itemize}
		\end{defn}
		\subsection{One-way general quantum finite automata}
		The measure-many version of LQFA was studied by Nayak \cite{41} and Ambainis et al. \cite{18}. This model is named as general quantum finite automata (GQFA). The definition is almost same as MM-1QFA except that for any symbol $\sigma$ in the input alphabet   induces a transition function, which is a combination of unitary transformation and projective measurement instead of a unitary transformation only. Li et al. \cite{43} studied the generalized version of 1QFA, called one-way general quantum finite automata (1gQFA), in which each symbol in the input alphabet induces a trace-preserving quantum operation instead of a unitary transformation. There are two types of 1gQFA: measure-once one-way general quantum finite automata (MO-1gQFA) and measure-many one-way general quantum finite automata (MM-1gQFA). 
		\subsubsection{Measure-once one-way general quantum finite automata}
		Hirvensalo \cite{42} introduced a model in which transition function corresponding to each input symbol induces a completely positive trace preserving mapping. It is a generalized version of MO-1QFA. 
		\begin{defn} \cite{43}
			MO-1gQFA is defined as a quintuple ${\it (H,\Sigma, \rho_0, \{\xi_\sigma\}_{\sigma \in \Sigma},P_{acc})}$, where 
			\begin{itemize}
				\item {\it H} is a finite- dimensional Hilbert space, 
				\item {\it $\Sigma$} is an input alphabet,
				\item $\rho_0$ is initial state (density operator on  $\it H$),
				\item $\xi$ is a trace-preserving quantum operation on $\it H$, for each $\sigma \in \Sigma$ ,
				\item $P_{acc}$ is projector called accepting subspace of $\it H$, $P_{rej}=1-P_{acc}$, then the $\{P_{rej},P_{acc}\}$  forms projective measurement on $\it H$.
			\end{itemize}
		\end{defn}
		Consider an input string $x=\sigma_1 \sigma_2...\sigma_n$   the computation of MO-1gQFA is progress as follows: firstly, the quantum operations $\xi_{\sigma1}\xi_{\sigma2},...\xi_{\sigma n}$  are performed on $\rho_0$  after reading the input symbols. At the end, projective measurement $\{P_{rej},P_{acc}\}$ is applied on the final state. On the basis of measurement, the input string is said to be accepted with certain probability. Therefore, it induces a function such that $f_{MO-1gQFA}: \Sigma^\ast \rightarrow [0,1]$ as $$ f_{MO-1gQFA}(x): Tr(P_{acc}\xi_n \circ ... \circ \xi_2 \circ \xi_1(\rho_0))$$
		where $\circ$  denotes the composition. Hence, $f_{MO-1gQFA}(x)$  defines the probability of acceptance for input string $\it x$.  
		
		\subsubsection{Measure-many one-way general quantum finite automata}
		Li et al. \cite{43} studied MM-1gQFA from the language recognition power and the equivalence problem. It has been proved that the class of languages recognized by MM-1gQFA is regular languages with bounded error. 
		\begin{defn} \cite{43}
			MM-1gQFA is a sextuple ${\it (H,\Sigma, \rho_0, \{\xi_\sigma\}_{\sigma \in \Sigma},H_{acc}, H_{rej})}$, where 
				\begin{itemize}
					\item {\it H} is a finite- dimensional Hilbert space, 
					\item {\it $\Sigma$} is an input alphabet,
					\item $\rho_0$ is initial state (density operator on  $\it H$),
					\item $\xi$ is a trace-preserving quantum operation on $\it H$, for each $\sigma \in \Sigma$,
					\item $H_{acc}, H_{rej}$ are the accepting and rejecting subspaces of $\it H$, respectively. Therefore, $\{H_{acc},H_{rej}, H_{non}\}$ spans the full space of $\it H$ .There is a measurement $\{P_{acc},P_{rej}, P_{non}\}$, of projectors onto a subspace $H_{acc}, H_{rej}$, and $H_{non}$ respectively.
		\end{itemize}
	\end{defn}
The input string $\it x$ is written on an input tape with both end-markers. The computation procedure of MM-1gQFA is same as that of MM-1QFA. Firstly, quantum operation is performed on current state $\rho$ Then, resultant state is measured using set of projectors $\{P_{acc},P_{rej},P_{non}\}$  If the observed state is in $H_{acc}$ or $H_{rej}$ subspace, then MM-1gQFA accepts or rejects the input string respectively; otherwise, the computation process continues with probability $Tr(P_{non} \xi_\sigma (\rho))$  In order to represent the total number of states of MM-1gQFA, we have defined $\nu:L(H)\times R\times R$. Thus, the current state of an automata is described as a triplet $\{\rho, P_{acc}, P_{rej}\} \in \nu$. The evolution of MM-1gQFA on reading a symbol $\sigma$  can be defined by an operator $T_\sigma$ on $\nu$ as  
$$ T_\sigma(\rho, P_{acc}, P_{rej})=(P_{non} \xi_\sigma(\rho)P_{non}, Tr(P_{acc} \xi_\sigma(\rho))+P_{acc}, Tr(P_{rej} \xi_\sigma(\rho))+P_{rej}) $$

\subsection{One-way quantum finite automata with quantum and classical states}
Qiu et al. \cite{44} introduced a new computing model of 1QFA named one-way quantum finite automata together with classical states (1QFAC). In this model, there are both quantum and classical states. It performs only one measurement for computing each input string i.e. after reading the last symbol. Measurement is performed according to the last classical state reached after processing the input string. 
\begin{defn} \cite{44}
A 1QFAC is defined as a nonuple ${\it (S, Q, \Sigma, \Gamma, s_0, \ket{\psi_0},\delta, U, M)}$, where 
\begin{itemize}
\item {\it S} is a finite set of classical states,  
\item {\it Q} is a finite set of quantum states,  
\item {\it $\Sigma$} is an input alphabet,

\item {\it $\Gamma$} is an output alphabet,
\item {\it $s_0$}  is an initial classical state,
\item $\ket{\psi_0}$ is a unit vector in Hilbert space $H(Q)$ (initial quantum state),
\item $\delta$ is a transition function: $S\times\Sigma \rightarrow S$, (the classical transition map),
\item $U=\{U_{s\sigma}\}_{s\in S, \sigma\in \Sigma}$, where $U_{s\sigma}:H(Q) \rightarrow H(Q)$ is a unitary operator for each $\it s$ and $\sigma$, 
\item  $M=\{M_s\}_{s\in S}$, where each $M_s$ is a projective measurement over $H(Q)$ with outcomes in $\Gamma$. 

\end{itemize}
\end{defn}
Therefore, each $M_s=\{P_s,\gamma\}_{\gamma \in\Gamma}$ such that $\sum_{\gamma\in \Gamma}p_{s,\gamma}=I$ and $P_{s,\gamma}P_{s,\gamma'}=	
\left\{
\begin{array}{ll}
P_{s,\gamma} ~~\gamma=\gamma'\\
0~~~~ \gamma\neq\gamma'\\
\end{array} \right \}$. After reading the input string, if the classical state is $it s$ and quantum state is in $\ket{\psi}$ then $\lVert P_{s,\gamma} \ket{\psi}\rVert^2$ is the probability of producing $\gamma$ as a result on input string. In above definition, $\Gamma=\{a,r\}$ where $\it a, r$ denotes acceptance and rejection of string respectively. Therefore, $M=\{P_{s,a}, P_{s,r},s \in S\}$, where $P_{s,a}, P_{s,r}$   are two projectors such that $P_{s,a}+P_{s,r}=I$ and $P_{s,a} P_{s,r}=0$.

The computing process of 1QFAC for an input string $x=\sigma_1,\sigma_2,...\sigma_n \in \Sigma^{\ast}$ is described as follows: It starts with an initial classical state $\it s_0$ and initial quantum state $\ket{\psi_0}$. On reading $\sigma_1$ the classical state changes into $\mu\sigma_1$ and quantum state becomes $U_{s_0\sigma_1} \ket{\psi_0}$. Further, on reading the next symbol, the classical states changes into $\mu(\sigma_1\sigma_2)$ and quantum state to the result of applying $U_{\mu(\sigma_1)\sigma_2}$ to $U_{s_0\sigma_1} \ket{\psi_0}$. This transformation of states occurs in succession till the last symbol. Thus, on reading the symbol $\sigma_n$ the classical state becomes $\mu(x)$ and quantum state is as $U_{\mu(\sigma_1...\sigma_{n-2}\sigma_{n-1}){\sigma_n}}U_{\mu(\sigma_1...\sigma_{n-3}\sigma_{n-2})}{\sigma_{n-1}}...U_{\mu(\sigma_1)\sigma_2}U_{s_0\sigma_1} \ket{\psi_0}$. Let $\mu(Q)$  is a set of unitary operators on $H(Q)$. For sake of accessibility, we define mapping $\it \nu:\Sigma^\ast \rightarrow \mu(Q)$ as: $\nu(\epsilon):I$, where $\it I$ denotes the identity operator on $H(Q)$.
$$ \nu(x)= U_{\mu(\sigma_1...\sigma_{n-2}\sigma_{n-1})_{\sigma_n}}U_{\mu(\sigma_1...\sigma_{n-3}\sigma_{n-2})_{\sigma_{n-1}}}...U_{\mu(\sigma_1)\sigma_2}U_{s_0\sigma_1} \ket{\psi_0} $$
Finally, the probability of 1QFAC on reading the input string $\it x$ produces a result $\gamma$ as: $$ P(x)=\lVert P_{\mu(x),\gamma}\nu(x) \ket{\psi_0}\rVert^2$$
\subsection{1.5-way quantum finite automata}
In 1.5-way quantum finite automata (1.5QFA), R/W head is allowed to remain stationary or move towards the right direction of the input tape, but it cannot move towards the left of input tape. Amano and Iwama \cite{14} proved that if the input tape is circular, then 1.5QFA can be designed for non-context-free languages. They have not considered the right-end marker ${\$}$ on the input tape. 
\begin{defn} \cite{39}
1.5QFA is defined by sextuple ${\it (Q,\Sigma, \delta, q_0, Q_{acc}, Q_{rej})}$, where 
\begin{itemize}

\item {\it Q} is a finite set of states,  
\item {\it $\Sigma$} is an input alphabet and $\Gamma=\Sigma \cup \{\#,\$\}$,	  
\item Transition function $(\delta)$ satisfies the condition: $\delta(q,\alpha,p,\leftarrow)=0$  for $\it p,q \in Q$, $\leftarrow$ is a head movement towards left direction (not allowed) and $\alpha \in \Gamma$
\item {\it $q_0$} is an initial classical state,
\item $Q_{acc},Q_{rej}$ represent the set of accepting and rejecting states.
\end{itemize}
\end{defn}

\subsection{Two-way quantum finite automata}
Two-way quantum finite automaton (2QFA) is a quantum variant of two-way finite automata. In 2QFA model, R/W head can remain stationary or move either in left or right direction. 2QFA is more dominant than the classical model. 
\begin{figure}[h]
	\centering
	\includegraphics[scale=0.6]{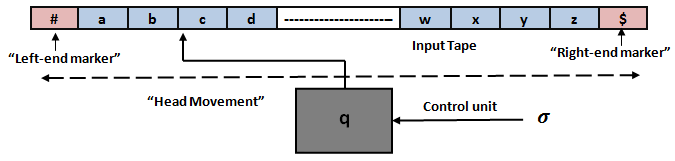}
	\caption{Two-way quantum finite automata}
\end{figure}
\begin{defn}
	Two-way quantum finite automaton is a sextuple $(Q, \Sigma, \delta,q_0, Q_{acc}, Q_{rej})$, where
	\begin{itemize}
		\item $\it Q$ is a set of states. Moreover, $Q=Q_{acc}\cup Q_{rej}\cup Q_{non}$, where $Q_{acc}, Q_{rej}, Q_{non}$ represent the set of accepting, rejecting and non-halting states respectively. 
		\item $\Sigma$ is an alphabet,
		\item Transition function $\delta$ is defined by $\delta: Q\times\Sigma \cup \{\#, \$\}\times Q\times D \rightarrow C$, where $D=\{\leftarrow,\uparrow,\rightarrow\}$ represent the left, stationary and right direction of read/write head. Transition function must satisfy the following conditions:
	\end{itemize}
		\begin{enumerate}[label=(\alph*)]
			\item Local probability and orthogonality condition:
			$$ \sum_{(q', d)\in Q\times D}^{\forall(q_1,\sigma_1),(q_2, \sigma_2)\in Q \times\Sigma} 	\overline{\delta(q_1,\sigma,q',d)}\delta(q_2,\sigma,q',d)=
			\left\{
			\begin{array}{ll}
			
			1~ q_1=q_2\\
			0~ q_1\neq q_2\\
			
			\end{array} \right \} $$
			\item First separability condition:
			$$ \sum_{q'\times Q}^{\forall(q_1,\sigma_1),(q_2, \sigma_2)\in Q \times\Sigma} 	\overline{\delta(q_1,\sigma_1,q',\rightarrow)}\delta(q_2,\sigma_2,q',\uparrow)+\overline{\delta(q_1,\sigma_1,q',\uparrow)}\delta(q_2,\sigma_2,q',\leftarrow)=0$$
			\item Second separability condition:
			$$ \sum_{q'\times Q}^{\forall(q_1,\sigma_1),(q_2, \sigma_2)\in Q \times\Sigma} 	\overline{\delta(q_1,\sigma_1,q',\rightarrow)}\delta(q_2,\sigma_2,q',\leftarrow)=0$$
		\end{enumerate}
\end{defn}
A 2QFA is simplified, for each $\sigma \in \Sigma$, if there exists a unitary linear operator $V_\sigma$ on the inner product space such that $L_2\{Q\}\rightarrow L_2\{Q\}$, where $\it Q$ is the set of states and a function $\it D:Q\rightarrow\{\leftarrow, \uparrow, \rightarrow\}$. Define transition function as
	\begin{equation}
\delta(q,\sigma,q',d)= \left\{ \begin{array}{l}
\bra{q'}{V_\sigma}\ket{q} \\
0 \end{array}
\middle\vert\;
\begin{array}{@{}l@{}}
\text{if} ~ D(q')=d \\
\text{else}
\end{array}
\right\}
\end{equation}
where $\bra{q'}{V_\sigma}\ket{q}$ is a coefficient of $\ket{q'}$ in  $V_\sigma \ket{q}$.

In order to process the input string by $M_{2QFA}$ , we assume that input string $\it x$ is written on input with both end-markers such that $\#x\$$. The automata is in any state $\it q$ R/W head is above the symbol $\sigma$. Then,  with the amplitude $\delta(q,\sigma,q',d)$ moves to state  $\it q',d\in \{\leftarrow, \uparrow, \rightarrow\}$, moves the R/W head one cell towards left, stationary and in right direction. The automata for processing an input $\it x$ corresponds a unitary evolution in the inner-product space $\it H_n$.

A computation of a 2QFA $M_{2QFA}$ is a sequence of superpositions $c_0,c_1,c_2,....,$ where $c_0$ is an initial configuration. When the automata are observed in a superposition state, for any $c_i$, it has the form $U_\delta\ket{c_i}\sum_{c\in C_n}\alpha_c\ket{c_i}$ where  defines the set of configurations, and the configuration $c_i$ is associated with amplitude $\alpha_c$  Superposition is valid; if the sum of the absolute squares of their probability amplitudes is unitary. The probability for a specified configuration is given by the absolute squares of amplitude associated with that configuration. Time evolution of quantum systems is given by unitary transformations. Each transition function $\delta$ induces a linear time evolution operator over the space $\it H_n$.
$$ U_{\delta}^x \ket{q,k}=\sum_{(q',d)\in Q\times D}\delta(q,x(k),q',d)\ket{q',k+d mod\lvert x \rvert}$$
for each $(q,k)\in C_{\lvert x\rvert}$, where $q\in Q, k\in Z_{\lvert x\rvert}$ and extended to $\it H_n$ by linearity \cite{10}. 
\subsection{Two-way Quantum Finite Automata with Quantum and Classical States}
Ambainis and Watrous \cite{24} introduced two-way finite automata with quantum and classical states (2QCFA). In this model, the internal state may be a (mixed) quantum state and the tape head position is classical. It is an intermediate model between 1QFA and 2QFA. 
\begin{defn} \cite{24}
A 2QCFA is defined as a nonuple $(S, Q, \Sigma, \Theta,  \delta,q_0, s_0, S_{acc}, S_{rej})$, where
\begin{itemize}
	\item $\it S$ is a finite set of classical states,
	\item $\it Q$ is a finite set of quantum states,
	\item $\Sigma$ is an input alphabet,
	\item $\Theta$ defines the evolution of the quantum portion of the internal state, 
	\item $\delta$ defines the evolution of the classical part,
	\item $q_0$is an initial quantum state $q_0 \in Q$,
	\item $s_0$ is an initial classical state $s_0 \in S$ ,
	\item,$S_{acc}, S_{rej}$ are the set of accepting and rejecting states respectively, $(S_{acc},S_{rej}\subseteq S)$  , 
	\item $\delta$ is a transition function $\delta$: $S\times \Sigma \rightarrow S$, (the classical transition map)
	
\end{itemize}
\end{defn}
Consider an input string $\it x$ the computation procedure of 2QCFA is as follows: Initially, tape squares are indexed with $1,2,...,\lvert x \rvert=n$ consists $x_1,x_2,...,x_n$ and squares are indexed with 0 to $\it n+1$ including right and left-end markers. On reading the symbol $\sigma$ the quantum state transformed according to $\Theta(s,\sigma)$ and further classical state and R/W head position changed according to $\delta(s,\sigma)$ and along with the outcome achieved from measurement of $\Theta(s,\sigma)$. Subsequently, the outcomes of each measurement are probabilistic, and the transitions for classical state may be also probabilistic. Therefore, for any input string, the 2QCFA is said to be accepted with probability  $S_{acc}(x)$  when the computation enters classical accepting state $S_{acc}$, otherwise rejected with probability $S_{rej}(x)$.Thus, the computation process is halted when it enters either classical accepting or rejecting state.

\subsection{Comparison of various quantum finite automata}

\begin{figure}[h]
	\centering
	\includegraphics[scale=0.38]{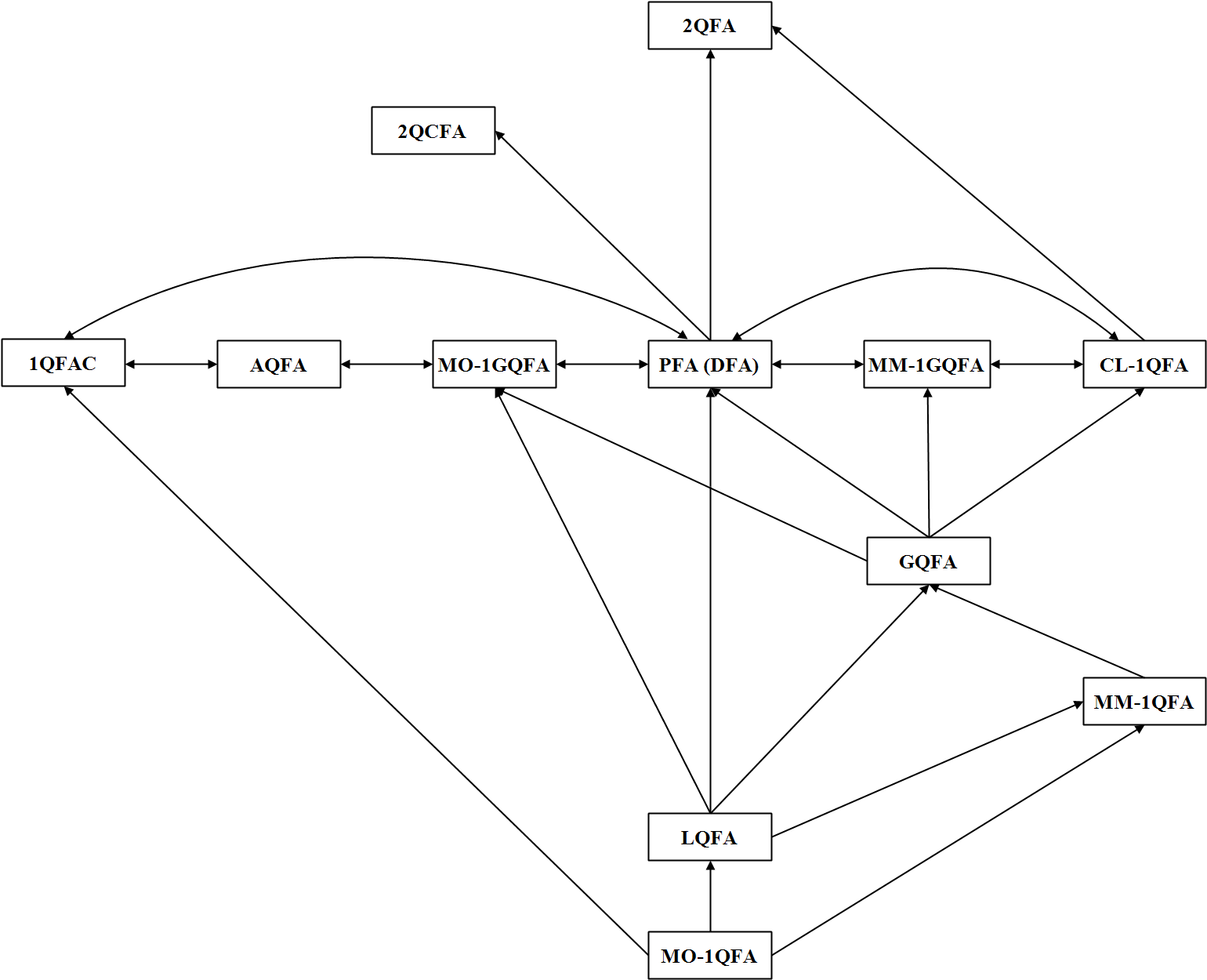}
	\caption{Inclusions among the languages recognized by QFA models with bounded error}
\end{figure}

In this section, we review comparative studies of various types of quantum finite automata. Fig 6 shows inclusion relationship among the languages recognized with a bounded error by 1QFA: MO-1QFA, MM-1QFA, LQFA, AQFA, CL-1QFA, 1QFAC; GQFA: MO-1gQFA, MM- 1gQFA, 2QFA and 2QCFA. The acronyms of models used to denote the classes of languages recognized by them, e.g. ‘‘MO-1gQFA’’ depicts the class of languages recognized by MO-1gQFA with bounder error. One-directional lines show containment relation and bidirectional lines show equivalence relation. The relationship among the model shows: The languages recognized by MO-1QFA are contained in those recognized by MM-1QFA. MM-1gQFA, CL-1QFA, MO-1gQFA, ancilla QFA recognize the same class of languages (i.e., regular languages). MO-1gQFA can simulate DFA and even probabilistic automata. Thus, both recognize exactly regular languages. 2QCFA is more powerful than two-way PFA because it can recognize regular languages with certainty and also some non-regular languages in polynomial time. MM-1gQFA, CL-1QFA, PFA, MO-1gQFA, AQFA, 1QFAC recognizes the same class of languages (i.e., regular languages). Table 6 depicts comparative study of various types of quantum finite automata.

\section{Literature Review}
Initial research into quantum automata was conducted by Moore et al. \cite{9}, Kondacs et al. \cite{10}, Ambainis et al. \cite{15,16}, and Brosky et al. \cite{17}. Thereafter, a significant body of research has been produced in the field of quantum automata theory. As such, a systematic review of the literature, in order to appraise, the current state of quantum automata research. Moore et al. \cite{9} introduced the concept of quantum finite and quantum pushdown automata. Their investigation revealed that quantum regular and context-free languages can be represented by quantum finite automata and quantum pushdown automata respectively.  Furthermore, they studied various closure properties, pumping lemmas and rational generating functions in parallel to classical automata theory. 
\begin{figure}[h]
	\centering
	\includegraphics[scale=0.55, frame]{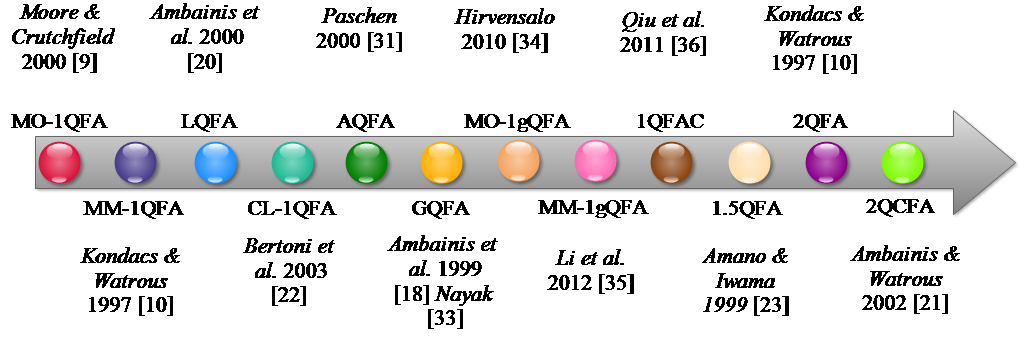}
	\caption{Summary of various quantum finite automata models}
\end{figure}

Brodsky et al. \cite{17} identified the language of acceptance, closure properties, and equivalence in 1QFA models. They designed MO-1QFA with bounded errors for group languages. Furthermore, they designed an algorithm for the equivalence of two MO-1QFAs and proved that probabilistic finite automata can simulate the MO-1QFA. Ambainis et al. \cite{16} demonstrated that languages recognized by 1QFA were not closed under Boolean or union operations. Furthermore, they analyzed the necessary and sufficient conditions for languages to be recognized by 1QFA. MM-1QFA, a variant of 1QFA is significantly more powerful than MO-1QFA for bounded error acceptance \cite{15}. In case of MM-1QFA, measurements are taken after reading each symbol from the input tape causing the string to be rejected without processing the complete input string. Brodsky et al. \cite{17} demonstrated that languages recognized by MM-1QFA are closed under word quotients, inverse homomorphisms, and complement operations, but are not closed under homomorphism.

Ambainis et al. \cite{15} demonstrated that MM-1QFA can accept a language with a probability of more than 7/9 if it is accepted by one-way reversible finite automata (1RFA). In fact, they showed that if we allow smaller probabilities, MM-1QFA can be more powerful than 1RFA. Furthermore, they demonstrated that MM-1QFA can recognize the language prime $L_p=\{a^p\mid where ~ p~ is ~ a~ prime\}$ with a probability close to 1, which is exponentially smaller than probabilistic finite automata (PFA). Ambainis et al. [18] demonstrated that MM-1QFA can be exponentially outsize than its corresponding deterministic finite automata for a particular language.  Kikusts \cite{19} designed an MM-1QFA for a language, which requires quadratically fewer states than its corresponding DFA.

Ambainis et al. \cite{20} introduced a generalized version of MO-1QFA: LQFA. In LQFA, the transition function is a combination of a unitary matrix and a projective measurement \cite{12}. They demonstrated that the class of languages recognized by LQFA are closed under complement, union, word quotients, and inverse homomorphism. LQFA with bounded errors can be designed for a proper subset of the languages for which MM-1QFA can be designed. Therefore, a MM-1QFA can be designed for all languages for which LQFA can be designed. Bertoni et al. \cite{99} explored 1QFA model where only measurements are allowed (MON -1QFA). It has been shown that the class of probabilistic behaviors of MON -1QFA is closed under under Hadamard product and \textit{f}-complement.

In 2003, Bertoni et al. \cite{30} proposed a quantum computing model named quantum finite automata with control states (CL-1QFA). CL-1QFA allows an arbitrary projective measurement on the Hilbert space spanned by $it Q$. It has been shown that the class of languages recognized by CL-1QFA with an isolated cut point is closed under Boolean operations, whereas MM-1QFA is not closed under Boolean operations \cite{16}. Furthermore, it has been proved that the languages recognized by CL-1QFA are regular languages with bounded error. Qiu and Yu \cite{103} studied the concept of multi-letter QFA ($QFA_*$) proposed by Belovs et al. \cite{104}. It has been shown that a language $L= ((a+b)^{*}b)$ can be recognized by 2-letter QFA with no error, which cannot be recognized by 1QFA. Qiu and Yu extended their work and proved that \textit{(k+1)}-letter QFA is more powerful than \textit{k}-letter QFA in language recognition. Qiu et al. \cite{101} studied the minimization decidability problems of the equivalence of multi-letter QFA.

Paschen \cite{40} introduced a new QFA model by adding some ancilla qubits to avoid the restriction of unitarity. This is done by addition of an output alphabet. Ciamarra \cite{45} introduced a new model of 1QFA having computational power is at least equal to classical automata. Ancilla QFA and Ciamarra QFA are special MO-1GQFA. Bhatia and Kumar \cite{112} proved that quantum Muller automaton is more dominant than quantum Büchi automaton. Further, the closure properties of quantum $\omega$-automata are proved.

It has been shown that GQFA \cite{18, 41} has exponentially more states than DFA for $L=\{w0 \mid w \in \{0,1\}^*\mid w \mid \leq m, m\geq 1\}$. Nayak \cite{41} showed that GQFA cannot accept all regular languages with bounded error.  It has been proved that the class of languages recognized by GQFA is stochastic languages with unbounded error \cite{46}. Hirvensalo \cite{42} introduced a generalized model of MO-1QFA, in which each transition function of each input symbol induces an entirely positive trace preserving mapping. It has been proved that MO-1gQFA and MM-1gQFA both recognize regular languages with bounded error \cite{43}.

Qiu et al. \cite{44} introduced a new computing model of 1QFA named one-way quantum finite automata together with classical states (1QFAC). Further, they proved that the class of languages recognized by 1QFAC is exactly all regular languages. They proved the equivalence problem of 1QFAC and showed that the minimization problem of 1QFAC is EXPSPACE. Therefore, 1QFAC is more powerful than MO-1QFA in terms of language recognition. In particular, 1QFAC is exponentially more concise than DFA for certain languages. Bianchi et al. \cite{102} investigated the comparison between QFA and CL-1QFA on basis of descriptional power and shown the procedure to convert CL-1QFA to corresponding DFA with at most exponential increase in size.

Yakaryilmaz \cite{22} conceptualized the idea of a blind counter with real-time quantum automata. Furthermore, he separated the real-time quantum automata with a blind counter (rtQ1BCA) with one-way deterministic k blind counter automata 1DkBCA by using the language $\it L^*$ , the Kleene closure of $L=\{a^nb^n \mid n\geq0\}$. It has been shown that $L=\{a^nb^n \mid n\geq0\}$ can be recognized by real-time quantum blind counter automata with negative one-sided error bound $\epsilon$. On the other hand, it has shown that $\it L$ can be easily recognized by one-way deterministic with one blind counter automata (1D1BCA). Yakaryilmaz \cite{22} also conjectured that one-way probabilistic $\it k$ blind counter automata (1PkBCA) cannot be designed for $\it L^*$. But recently, Nakanishi et al. \cite{48} disapproved this conjecture. They provide an algorithm for 1P1BCA that recognizes $\it L^*$ .

Amano and Iwama \cite{14} introduced the concept of 1.5QFA, demonstrating that the problem of emptiness was unsolvable using this model. 1.5QFA can accept some languages not otherwise recognized by 1QFA. Nakanishi et al. \cite{21} demonstrated that 1.5QFA could recognize the non-regular language $L=\{a^mdb^ncb^n \mid n\geq0, m\geq0\}$. They demonstrated that 1.5QFA could recognize non-context-free languages with a probability of less than 2/3, which can be recognized by QCPA with larger probability. Yakaryilmaz \cite{39} concluded that 1.5QFA can recognize non-stochastic languages.

Kondacs et al. \cite{10} introduced the concept of 2-way quantum finite automata (2QFA), demonstrating that it was more powerful than the classical two-way finite automata. In 2QFA, the R/W head can move either left or right, or may remain stationary. Furthermore, it has been shown that 2QFA can accept non-regular languages with a one-sided error in linear time. They also noted that 2QFA can accept non-context-free languages with a bounded error in linear time.  Ambainis et al. \cite{15} reported that 2QFA was difficult to implement because the quantum states needed to keep track of the position of the read/write head, which grows with the input string. Yakaryilmaz et al.\cite{23} defined a more efficient probability amplification technique for $L=\{a^nb^n \mid n\in N\}$. 2QFA that can accept the language They presented various techniques for reducing the state of probability amplification and designed machines with less state complexity. In recent years, the research exertion on QFA models has decisive on one-way models. However, the study of two-way quantum finite automata (2QFAs) is attracting less attention. Recently, Bhatia and Kumar \cite{108} modelled RNA secondary structures loops such as, internal loop and double helix loop using 2QFA. Bhatia and Kumar \cite{111} introduced a variant of 2QFA with multiheads and proved more powerful than its classical and quantum counterparts. It has been proved that two-way quantum multihead finite automata can be designed for a language containing of all words whose length is a prime number with less number of heads as compared to classical one.

Ambainis et al. \cite{24} described a concept of 2QFA with classical states: Two-way finite automata with quantum and classical states (2QCFA). They demonstrated that 2QCFA is more powerful (it can simulate any classical automaton and can recognize some languages that classical automata cannot) and can be implemented with a quantum part of constant size. They also proved that classical 2-way finite automata equipped with constant-size quantum registers can perform quantum transformations, measurements and can recognize non-regular language $L=\{a^nb^n \mid n\in N\}$. On the other hand, 1.5QFA and 2QFA have quantum registers whose size depends on the input length. Thus, they showed that 2QCFA is superior to these models that have finite amount of quantum registers. Furthermore, they designed 2QCFA for the palindrome language, which cannot be designed by two-way deterministic finite automata (2DFA).

Qiu \cite{33} demonstrated various Boolean operations over the class of languages recognized by 2QCFA. It has been proved that 2QCFA is closed under catenation under some conditions and also gave some examples of languages recognize by 2QCFA. Zheng \cite{25} proved that $L=\{xcy\mid x,y \in \{a,b\}^*, c \in \Sigma, \mid x \mid = \mid y \mid\}$   can be recognized by 2QCFA with one-side error probability in polynomial time, but, it can be recognized by 2PFA in exponential time with bounded error.

Zheng et al. \cite{47} introduced two-way two-tape finite automata with quantum and classical states (2TQCFA) model. They demonstrated efficient 2TFA (two-tape finite automata) algorithms to recognize languages which can be recognized by 2QCFA. Furthermore, they have given efficient 2TQCFA algorithms to recognize $L=\{a^nb^{n^2}\mid n\in N\}$. They have also introduced  $\it k$-tape automata with quantum and classical states (kTQCFA) and proved that it can recognize $L=\{a^nb^{n^k} \mid n\in N\}$ . Yakaryilmaz et al. \cite{26} defined a new model of two-way finite automata in which we can reset the R/W head to the leftmost position during computation. 

\subsection{Succinctness results}
In recent years, state complexity advantages of quantum finite automata is a new direction. The power of quantum finite automata solving promise problems is a hot topic. It has got enormous response from various researchers.  In 2009, Ambainis and Nahimovs \cite{79} designed the quantum automata using probabilistic argument. It has been shown that proposed automata use 4/{$\epsilon$} log 2\textit{p} states to recognize a $L_p=\{a^i |$ \text{\textit{i} is divsible by \textit{p}}$\}$, whereas classical automata take \textit{p} states. The exact quantum computation has been widely studied for promise problems.  Ambainis and Yakaryilmaz \cite{80} proved that infinite family of promise problems can be recognized by two-way quantum finite automata in realtime by just tuning transition amplitudes. It has been shown that $A_{yes}^k=\{a^{i2^{k}} |$ \text{\textit{i} is a non-negative integer}$\}$ and $A_{no}^k=\{a^{i2^{k}}|$ \text{\textit{i} is a positive odd integer}$\}$ can be recognized by two-way quantum finite automata in realtime, whereas any DFA takes at least $2^{k+1}$ states.

Bianchi et al. \cite{81} shown the superiority of quantum variant over DFA by considering the promise problems $(A^{N, r_1}_{yes}, A^{N, r_2}_{no})$ such that 0$\leq r_1 \neq r_2$, where $A^{N, r_1}_{yes}= \{\sigma^n | n \equiv r_1 ~mod~ N\}$  and $A^{N, r_2}_{no}= \{\sigma^n | n \equiv r_2 ~mod~ N\}$ over unary alphabet $\sigma$. It has been shown that extended version of promise problem can be recognized by MM-1QFA exactly, but DFA takes \textit{d} states, where  \textit{d} is the smallest integer  such that $d| N$ \text{and} $d\mid(r_2-r_1)~ mod~ N$. Rashid and Yakaryilmaz \cite{82} designed zero-error QFA for promise problems which cannot be recognized by PFA with bounded-error.  Yakaryilmaz and Say \cite{26} investigated various infinite families of regular languages which can be recognized by 2QFA with probability greater than 1/2
after tuning the transition amplitudes, whereas the size of one-way variants (1PFA and 1QFA) increase without bound. Zheng et al. \cite{83} examined the state succinctness of 2QCFA. It has been proved that $A_{yes}^{eq}=\{w=a^mb^m | w \in \{a,b\}^*\}$ and $A_{no}^{eq}=\{w \neq a^mb^m | w \in \{a,b\}^*, |w|\geq m\}$ $A_{yes}^{eq}=\{w=a^mb^m | w \in \{a,b\}^*\}$ can be recognized by a 2QCFA in a polynomial expected running time with  O(log 1/$\epsilon$) classical states and constant number of quantum states, whereas DFA takes at least at least 2\textit{m}+2 states and $\sqrt{m}$ for 2DFA and 2NFA respectively.

Gruska et al. \cite{84} generalized the distributed Deutsch–Jozsa promise problem to find the Hamming distance $H(x,y)$ between strings $x, y \in \{0, 1\}^n$ i.e weather $H(x,y)=n/2$ or $H(x, y)=$0. Further, the disjointness promise problem has been studied and shown an exponential gap between quantum (also probabilistic) and deterministic communication complexity.  Zheng et al. \cite{85} studied the state complexity of semi-quantum one way and two-way finite automata. It has been proved that promise problem $A_{eq}=\{A_{yes}(n), A_{no}(n)\}$, where and $A_{no}(n)=\{x\#y| x \neq y, \{x, y\}\in \{0,1\}^n\},$ and $A_{yes}(n)=\{x\#y| x = y, \{x, y\}\in \{0,1\}^n\}$ can be recognized by an exact 1QCFA with $O(n)$ classical states and \textit{n} quantum states, but the size of 1DFA is $2^{\Omega(n)}$. Further, the state complexity of 2QCFA is shown for languages $L(p)=\{a^{kp}|k \in \mathbb{Z}^{+}\}, p \in \mathbb{Z}^+$, and $C(m)=\{w|w \in \Sigma^m\}$ for any $m \in \mathbb{Z}^+$ and any input alphabet $\Sigma$. It has been proved that the language $C(m)$ can be recognized by 2QCFA in $O(1/\epsilon~ m^2 |w|^4$) expected running time with one-sided error $\epsilon$ using constant number of classical states and two quantum basis states, whereas 2PFA takes $\sqrt[3]{(log~ m)/b}$. Gruska et al. \cite{96} introduced two acceptance modes named recognizability and solvability, and explored various promise problems classical, finite, quantum and semi-quantum automata. It has been shown that quantum variants of classical automata have significant more power.

Zheng and Qiu \cite{86} presented a method for state succinctness results of QFA.  It has been shown that state succinctness outcomes can be extracted from query complexity results. The simpler quantum algorithm is given for partial function. It has been proved that promise problem $A(n)=\{A_{yes}(n), A_{no}(n)\}$, where and $A_{no}(n)=\{x\#y\#\#x\#y| \{x, y\}\in \{0,1\}^n, H(x, y)=n/2\},$ and $A_{yes}(n)=\{x\#y\#\#x\#y| \{x, y\}\in \{0,1\}^n, H(x, y)\in \{0, 1, n-1, n\}\}$ can be recognized  by 1QCFA with $O(n^3)$ classical states and $O(n^2)$ quantum states, but the size of 1DFA is $2^{\Omega(n)}$. Zheng et al. \cite{87} considered the concept of time-space complexity and proved that quantum computing models are more superior than classical variants in context of language recognition. The communication complexity results are used to derive time-space upper-bounds for 2QCFA and proved more desirable than probabilistic Turing machine (PTM). Further, it has been also proved that 2PFA is more superior than DTM in terms of time-space complexity. Bhatia and Kumar \cite{109} introduced the concept of quantum queue automata and proved more powerful than classical variants in real-time. Zheng \cite{94}  et al. examined the promise problems recognized by quantum, semi-quantum finite automata and classical automata. It has been proved that there exists a promise problem that can be recognized by MO-1QFA exactly, but cannot be recognized by DFA. Further, it has bee proved that there exists a promise problem that can be recognized by one-way finite automata with quantum and classical states (1QFAC), but 1PFA cannot be designed with error probability.

Zheng et al. \cite{92} proposed one-way finite automata with quantum and classical states (1QCFA) and examined its closure properties. Further, the main succinctness result is derived and the state complexity of 1QCFA is demonstrated. It has been proved that language ($L_m$) can be recognized by 1QCFA  with $O(log~m)$ quantum states and 12 classical states for any prime \textit{m}. But, $L_m$ is cannot be recognzied by MM-1QFA. Although, it can be recognized by any PFA with at last \textit{m} states. Bianchi \cite{100} shown that 1QCFA can recognize regular languages with isolated cut-point. Further, 1QCFA  is designed for word quotients and inverse homomorphic images of languages with isolated cut-point and polynomial increase in size.
From the perspective of state complexity, Qiu et al. \cite{93} studied one-way quantum finite automata with quantum and classical states (1QFAC) and stated that it is more concise than DFA 
exponentially. Further, Qiu et al. \cite{103} studied the equivalence problem of 1QFAC by a bilinearization technique. The quantum basis state minimization problem of 1QFAC can be solved in EXPSPACE. Further, there exists a polynomial-time $O((k_1n_1)^2+(k_2n_2)^2-1)^4$ algorithm to determine their equivalence. Recently, Gainutdinova and Yakaryilmaz \cite{97} investigated the computational power of PFA and QFA by studying promise problems based on unary languages. It has been proved that there exists unary promise problems which can be recognized by QFAs, but cannot be recognized by 
PFAs with bounded error. Further, they have considered a promise problem with two parameters and  shown that QFA is more succinct than PFA by fixing one parameter, whereas PFA
can be more succinct than DFA exponentially after fixing the other parameter.

\subsection{Other results}
Interactive proof (IP) systems with verifiers are modeled by quantum finite automata, where a powerful quantum prover communicates with a quantum-automaton verifier via a shared communication channel. The quantum prover is computationally strong (unlimited), whereas the power of verifier is limited. In 2009, Nishimura and Yamakami \cite{35} explored a direct application of QFA to IP, where QFA is a verifier and prover can be unitary operation. The computational power of MO-1QFA, 1QFA and 2QFA is studied by imposing restrictions. QIP(\textit{A}) represents the class of languages recognized by QIP associated with verifier \textit{A}. It has been shown that the language recognition power of 1QFA is increased with an interaction of prover i.e. QIP(1QFA) is equal to regular languages. Further, it has been proved that 2QFA a is proper subset of QIP(2QFA) in polynomial time for language $P=\{x\#x^R|x \in\{0, 1\}^*\}$, where \# is a separator. Furthermore, It has been demonstrated that the QIP(MO-1QFA) verifier is not closed under complementation and QIP(2QFA) is closed under union. Zheng et al. \cite{88} investigated the power of QIP associated with 2QCFA verifier. It has been shown that the languages $L_{m}=\{xay|x, y \in \{a, b\}^*,|x|=|y|\}$ and $L_{p}=\{xax^R|x, \in \{a, b\}^*\}$ can be recognized by QIP(2QFA) in polynomial and exponential time with one-sided bounded error.

A debate system is a generalized version of IP system, where two provers argue over
the belongingness of particular string (\textit{x}) in a language (\textit{L}). The prover prompts the verifier that $x \in L$ and refuter attempts to prove that \textit{x} does not belong to \textit{L}. Yakaryilmaz et al. \cite{68} investigate that quantum model surpass the classical ones when bounded to run in polynomial time. It has been proved that non-context free language $L^P=\{1^p|p~ \text{is prime}\}$ and $L^m=\{1^{2^m}|m>0\}$ can be recognized by 2QCFA in polynomial time debates using two qubits, $L^s=\{1^{m^2}|m>0\}$ and $L^n=\{1^n|n~\text{is a fibnacci number}\}$ with three qubits respectively. In 2015, Nishimura and Yamakami \cite{90} demonstrated the strengths and weaknesses of QIP(QFA) by imposing some restrictions on verifiers and provers behavior in context of language recognition power. Yamakami \cite{98} introduced QFA with extra information known as advice, which depends on the length of an input string. It has explored that QFA with advice cannot be designed for some regular languages.  Yamakami \cite{91} extended the single prover model, where more than one prover can interact with a verifier named quantum multiple prover
interactive proof (QMIP) systems and shown the advantages of quantum computation over classical variants. It has been proved that QIP(1QFA) $\neq$ QMIP(1QFA) and QIP(2QFA) $\neq$ QMIP(2QFA) in polynimal time. Scegulnaja-Dubrovska et al. \cite{105} shown that palindrome language can be recognized by MM-1QFA with postselection, whereas it cannot be recognized by PFA with non-isolated cut-point 0.  Yakaryilmaz and Say \cite{106} proposed a quantum finite automata with postselection and proved that the computational power of real-time probabilistic and quantum finite automata can be increased with postselection. Further, the class of languages recognized by QFA with postselection are examined.

\begin{center}
	\begin{longtable}{p{2.3cm}|p{13cm}} 
		\caption{Notable finding of various QFA models}\\ 
		\hline \multicolumn{1}{c}{\textbf{Authors}} & \multicolumn{1}{c}{\textbf{Notable Findings}} \\ \hline 
		\endfirsthead
		
		\multicolumn{2}{c}%
		{{\bfseries \tablename \thetable{} -- continued from previous page}} \\
		\hline \multicolumn{1}{c}{\textbf{Authors}} &
		\multicolumn{1}{c}{\textbf{Notable Findings}}  \\
		\hline 
		\endhead
	
		\endfoot	
		\hline
		\endlastfoot
		Moore et al., 2000 \cite{9} & \begin{itemize}
				\vspace{-0.2cm}
			\item Introduced the concept of QFA and QPDA.
			\vspace{-0.2cm}
			\item QFA and QPDA can be designed for quantum regular and quantum context-free language respectively.	
			\vspace{-0.2cm}
			\item They analyzed various closure properties, pumping lemmas and rational generating functions for QFA. 
			\vspace{-0.2cm}
			\item Carry out a comparative study of quantum and classical automata. 
		\end{itemize}\\
	\hline
	Ambainis et al., 2000 \cite{16} &  \begin{itemize}
			\vspace{-0.2cm}
		\item	Characterized the class of languages recognized by 1QFA.
		\vspace{-0.2cm}
		\item	Examined the necessary and sufficient condition for a language to be recognized by 1QFA.
		\vspace{-0.2cm}
		\item	Proved that 1QFA is not closed under union and Boolean operation.
	\end{itemize}\\ 
	\hline
	Brodsky et al., 2008 \cite{17}	& \begin{itemize}
		\vspace{-0.2cm}
		\item	Characterize the MO-1QFA on the basis of bounded error acceptance mode.
		\vspace{-0.2cm}
		\item	Simulate the MO-1QFA by PFA. 
		\vspace{-0.2cm}
		\item	MO-1QFA restricted in term of computational power, and it can accept only group languages.
	\end{itemize}\\
	\hline
	Ambainis et al., 1998 \cite{15} & \begin{itemize}
			\vspace{-0.2cm}
		\item Carry out a comparative study between 1QFA and classical automata.
		\vspace{-0.2cm}
		\item Simulate 1QFA by reversible finite automata for the recognition of languages. In fact, they showed that if we allow smaller probabilities, MM-1QFA can be more powerful than 1RFA.
		\vspace{-0.2cm}
		\item Proved that 1QFAs are space-efficient.
		\vspace{-0.2cm}
		\item Proved that if a reversible finite automaton can be designed for a language, and then 1QFA can be designed that accepts the same language with probability more than 7/9.
	\end{itemize}\\
	\hline
	Kikusts, 1998 \cite{19}	& \begin{itemize}
		\vspace{-0.2cm}
		\item Compared the MM-1QFA with the classical automata.
		\vspace{-0.2cm}
		\item Designed an MM-1QFA, which is quadratically smaller in size than its corresponding deterministic finite automaton.
	\end{itemize}\\
	\hline
	Ambainis et al., 1999 \cite{18}	& \begin{itemize}
		\vspace{-0.2cm}
		\item	Introduced a technique for encoding the classical bits into fewer quantum bits using the principle of entropy coalescence.
		\vspace{-0.2cm}
		\item	Proved that MM-1QFA has exponentially more states than its nominal DFA for a particular language.
	\end{itemize}\\
\hline
	Brodsky et al., 2002 \cite{17} & \begin{itemize}
		\vspace{-0.2cm}
		\item	Proved that MM-1QFA is closed under complement, inverse homomorphisms and word quotients.
		\vspace{-0.2cm}
		\item	Proved that MM-1QFA is not closed under homomorphisms.
		\vspace{-0.2cm}
		\item	They have given the necessary and sufficient condition for a class of languages recognized by MM-1QFA.
	\end{itemize}\\
	\hline 
 & \multicolumn{1}{c}{\textbf{LQFA}} \\
	\hline
	Ambainis et al., 2004 \cite{20}	& \begin{itemize}
		\vspace{-0.2cm}
		\item	Proposed the concept of LQFA.
		\vspace{-0.2cm}
		\item	Carry out a comparative study of LQFA with Brodsky and Pippenger's MO-1QFA and MM-1QFA. 	
		\vspace{-0.2cm}
		\item	Proved that LQFA are closed under complement, union, word quotients and inverse homomorphism.
		\vspace{-0.2cm}
		\item	Proved that LQFA can be designed for a proper subset of the languages recognized by MM-1QFA with bounded error.
	\end{itemize}\\
	\hline
& \multicolumn{1}{c}{\textbf{CL-1QFA}} \\
	\hline
	Bertoni et al. 2003 \cite{30}	& \begin{itemize}
		\vspace{-0.2cm}
		\item	Proposed a quantum computing model named quantum finite automata with control states.
		\vspace{-0.2cm}
		\item	Proved that the class of languages recognized by CL-1QFA with an isolated cut point is closed under Boolean operations.
		\vspace{-0.2cm}
		\item	Proved that the languages recognized by CL-1QFA are regular languages with bounded error.
		\vspace{-0.2cm}
		\item	They demonstrated that  -state CL-1QFA could be transformed into   BLM (Bilinear machine), where  denotes the number of states of a minimized deterministic finite automata.
	\end{itemize}\\
	\hline
	& \multicolumn{1}{c}{\textbf{MO-1gQFA} }\\
	\hline
	Hirvensalo 2010 \cite{42}	& \begin{itemize}
		\vspace{-0.2cm}
		\item Introduced a model in which each transition function of each input symbol induces a completely positive trace preserving mapping.
		\vspace{-0.2cm}
		\item	Proved that the class of languages recognized by MO-1gQFA is regular languages with bounded error. 
		\vspace{-0.2cm}
		\item	It can also simulate classical DFA and even probabilistic automata.
	\end{itemize}\\
	\hline
	& \multicolumn{1}{c}{\textbf{MM-1gQFA} }\\
	\hline
	Li et al. 2012 \cite{43}	& \begin{itemize}
		\vspace{-0.2cm}
		\item Studied MM-1gQFA from the language recognition power and the equivalence problem. 
		\vspace{-0.2cm}
		\item	Proved that the class of languages recognized by MM-1gQFA is regular languages with bounded error.
		\vspace{-0.2cm}
		\item 	Proved the equivalence problem of MM-1gQFA.
	\end{itemize}\\
	\hline
	& \multicolumn{1}{c}{\textbf{1QFAC} }\\
	\hline
	Qiu et al. 2011 \cite{44} & \begin{itemize}
		\vspace{-0.2cm}
		\item Introduced a new computing model of 1QFA named one-way quantum finite automata together with classical states (1QFAC).
		\vspace{-0.2cm}
		\item	Proved that the class of languages is recognized by 1QFAC are exactly all regular languages.
		\vspace{-0.2cm} 
		\item	It is more powerful than MO-1QFA in terms of language recognition. 
		\vspace{-0.2cm}
		\item	In particular, 1QFAC is exponentially more concise than DFA for certain languages.
		\vspace{-0.2cm}
		\item	Proved the equivalence problem of 1QFAC.
	\end{itemize}\\
	\hline
& \multicolumn{1}{c}{\textbf{2QFA} }\\
	\hline
	Kondacs et al., 1997 \cite{10}	& \begin{itemize}
		\vspace{-0.2cm}
		\item	Introduced the concept of 2QFA.
		\vspace{-0.2cm}
		\item	2QFA is a quantum version of 2-way deterministic finite automata.
		\vspace{-0.2cm}
		\item They proved that 2QFA could be designed for non-regular languages with one-sided error.
		\vspace{-0.2cm}
		\item	They proved that 2QFA could be designed for non-context-free languages with a bounded error in linear time.
	\end{itemize}\\
	\hline
	Ambainis et al., 2002 \cite{24} & \begin{itemize}
		\vspace{-0.2cm}
		\item	Introduced a new quantum model 2QCFA. 
		\vspace{-0.2cm}
		\item 2QCFA is identical to 2QFA with classical states.
		\vspace{-0.2cm}
		\item	They proved that 2QCFA model is more powerful than the classical two-way finite automata and can be implemented with a quantum part of constant size.
		\vspace{-0.2cm}
		\item	They proved that classical 2-way finite automata equipped with constant-size quantum registers can perform quantum transformations, measurements and can recognize non-regular language  $L=\{a^nb^n \mid n \in N\}$.  
		\vspace{-0.2cm}
		\item	They have designed 2QCFA for palindromes which are not possible with 2-way deterministic finite automata.
	\end{itemize}\\
	\hline
	Qiu, 2008 \cite{33} & \begin{itemize}
		\vspace{-0.2cm}
		\item	He proved that 2QCFA were closed under concatenation operation under certain conditions.
		\vspace{-0.2cm}
		\item	Given the language  $L=\{xx^R \mid x \in \{a,b\}^*, \#x(a)=\#x(b)\}$, where  denotes the number of ’s in . 2QCFA can be designed for language L with one-side error probability in polynomial time. 
		\vspace{-0.2cm}
	\end{itemize}\\
	\hline
& \multicolumn{1}{c}{\textbf{Recent results} }\\
	\hline
	Gainutdinova and Abuzer Yakaryılmaz 2015 \cite{64} & \begin{itemize}
		\vspace{-0.2cm}
		\item Investigated the power of QFA and PFA on unary promise problems. Proved that QFAs are more powerful than PFAs with bounded error acceptance.
		\vspace{-0.2cm}		
		\item	Shown that the computational power of Las-Vegas QFAs and bounded-error PFA is equivalent to DFAs on binary problems.
		\vspace{-0.2cm}
		\item	Further, it has been investigated that on fixing one parameter of a QFA, it becomes more succinct than PFA for two parameters unary promise problems.
	\end{itemize}\\
	\hline
	Demirci et al. 2014 \cite{65}	& \begin{itemize}
		\vspace{-0.2cm}
		\item Presented the decidability results of real-time classical and quantum alternating models.
		\vspace{-0.2cm}
		\item	Proved that alternating QFAs on unary alphabets are undecidable with two alternations. But, it is decidable for nondeterministic QFAs on general alphabets.
		\vspace{-0.2cm}
		\item	Shown that unary squares language can be recognized by alternating QFAs with two alternations.
		\vspace{-0.2cm}
		\item	On the other hand, they have defined real-time private alternating finite automata (PAFA) and proved that it can recognize some non-regular unary languages, and its emptiness problem is undecidable.
	\end{itemize}\\
	\hline
	Nakanishi et al. 2014 \cite{48}	& \begin{itemize}
		\vspace{-0.2cm}
		\item Investigated that one-way quantum one-counter automata (1Q1CA) is more powerful than its probabilistic counterpart on promise problems with zero-error.
		\vspace{-0.2cm}
		\item	Shown that promise problem XOR-EQ can be solved by 1Q1CA exactly, which cannot be solved by any one-way deterministic one-counter automata (1D1CA). Nakanishi et al. \cite{48} provide an algorithm for 1P1BCA that recognizes $\it L^*$ to disapprove the conjecture by Yakaryilmaz \cite{22}.
	\end{itemize}\\
	\hline
	Say and Yakaryılmaz 2014 \cite{66}	& \begin{itemize}
		\vspace{-0.2cm}	
		\item Introduced a modern quantum finite automaton involving superoperators. 
		\vspace{-0.2cm}
		\item Shown that modern QFA can outperform various classical counterparts.
		\vspace{-0.2cm}
		\item	Proved that non-regular language $L_{EQ}=\{w \mid w \in \{a,b\}^*, w_a=w_b\}$ cannot be recognized by 2PFA with bounded error in polynomial time, but it can be recognized by 2QFA.
		\vspace{-0.2cm}
		\item	Shown that any real-time PFA can be simulated by a real-time QFA having the same number of states.
	\end{itemize}\\
	\hline
	Say and Yakaryılmaz 2014 \cite{67} & \begin{itemize}
		\vspace{-0.2cm}
		\item	Given a constant space theorem which states that: for every language $\it L$ there exists a language $\it L^{'}$ on the alphabet $\{a,b\}$ such that $\it L^{'}$ is Turing equivalent to $\it L$ and there exists a 2QCFA which recognizes $\it L^{'}$ with bounded error in polynomial time.
		\vspace{-0.2cm}
		\item	Constructed a public-coin interactive proof system, where messages are classical and verifier is 2QCFA.
		\vspace{-0.2cm}
		\item	Investigated that when 2QCFA is used as verifiers in public-coin interactive proof systems, then it can verify membership in all languages with bounded error and outperform classical counterparts.
		\vspace{-0.2cm}
		\item	Proved that for any language \textit{L} on the binary alphabet $B=\{a,b\}$   there exists a bounded error public-coin interactive proof system, where the verifier is a 2QCFA with two quantum bits, and the expected runtime is computed as $2^{2^{O(n)}}$, where $\it n$ is the length of input.
		\end{itemize}\\
		\hline
		Yakaryılmaz et al 2014 \cite{68} & \begin{itemize}
			\vspace{-0.2cm}	
			\item	Studied a model where the verifier interacts with a prover who tries to convince the verifier that input string $x\in L$ and a refuter tries to prove that $x \notin L$. 
			\vspace{-0.2cm}
			\item	Proved that languages $L_{UPRIME}=\{1p\mid p ~ is ~ a ~ prime\}$ and $L_{POWER}=\{1^{2^m} \mid m>0\}$ has polynomial time debates by 2QCFA as a verifier with only two qubits.
			\vspace{-0.2cm}
			\item	Proved that every Turing-decidable language has debates checkable by a four quantum states 2QCFA with bounded error, by considering only rational entries in its quantum operators.
			\vspace{-0.2cm}
			\item	Demonstrated that some non-context free languages have short debates with 2QCFA as quantum verifiers and proved that it can outperform its classical counterpart when constrained to run in polynomial time.
		\end{itemize}\\
		\hline
		Giannakis et al. 2015 \cite{70}	& \begin{itemize}
			\vspace{-0.2cm}	
			\item	Proposed quantum version of  $\omega$-automata for infinite periodic words.
			\vspace{-0.2cm}
			\item	Defined simple periodic and simple  $\it m$-periodic quantum  $\omega$-automata.
			\vspace{-0.2cm}
			\item	Proved that it can recognize $(a^mb)^\omega$, where $it m$ is a finite number space efficiently.
		\end{itemize}\\
		\hline
		Belovs et al. 2016 \cite{72} & \begin{itemize}
			\vspace{-0.2cm}	 \item	Determined that 2QFA with minimum number of states can separate a pair of words.
			\vspace{-0.2cm}
			\item	It has been investigated that a language which cannot 	Shown that 2QFA can separate any easy pair with zero-error but cannot separate some hard pairs even in nondeterministic acceptance mode by using real amplitudes. 
			\vspace{-0.2cm}
			\item	Proved that 2QFA with complex amplitudes can separate any pair in nondeterministic acceptance mode.
		\end{itemize}\\
		\hline
		Ganguly et al. 2016 \cite{74}	& \begin{itemize}
			\vspace{-0.2cm}	
			\item	Proposed one-way multihead quantum finite automata (k-1QFA) and proved that it can recognize all unary languages. 
			\vspace{-0.2cm}
			\item Shown that a language $L=\{w=w^R \mid w \in \{a,b\}^*\}$ cannot be recognized by 1-way deterministic 2-head finite automata (2-1DFA), but can be recognized by 2-1QFA.
			\vspace{-0.2cm}	
			\item	Furthermore, it has been investigated that it is more powerful than 1-way reversible 2-head finite automata.
		\end{itemize}\\
		\hline
		Ganguly et al. 2016 \cite{75} & \begin{itemize}
			\vspace{-0.2cm}	
			\item	Introduced two-tape one-way quantum finite automata (2-2T1QFA) and claimed that it can recognize all regular languages. 
			\vspace{-0.2cm}
			\item	It has been investigated that a language which cannot be recognize by any deterministic multihead finite automata can be recognized by 2-2T1QFA. 
		\end{itemize}\\
		\hline
		Bianchi et al. 2017 \cite{73} & \begin{itemize}
			\vspace{-0.2cm}	
			\item	Outlined the Bertoni’s ideas related quantum computational theory.
			\vspace{-0.2cm} 
			\item	Introduced the more advanced statistical framework to prove the existence of small quantum finite automata accepting periodic languages.
			\vspace{-0.2cm} 
			\item	Shown the promise problems to relate the power of QFA with their classical counterparts by considering multiperiodic languages.  
		\end{itemize}\\
		\hline
		Bhatia and Kumar 2018 \cite{77} \cite{107}, Saggi \cite{110} & \begin{itemize}
			\vspace{-0.2cm}	
			\item	Investigated the relation of quantum finite-state machines (QFSM) with matrix product state (MPS) of quantum spin systems.
			\vspace{-0.2cm}
			\item	Efficiently simulated MPS (GHZ state, AKLT state, Cluster state and W state) with a broader quantum computational theory using unitary criteria. 
			\vspace{-0.2cm}
			\item	Proved that QFSM is equivalent to MPS representations of ground state of quantum spin systems.
			\item Simulated MPS on a quantum computer using circuits and the probability distribution among the quantum states is calculated.
		\end{itemize}\\
		\hline
		Khrennikov and Yurova 2017 \cite{78} & \begin{itemize}
			\vspace{-0.2cm}	
			\item	Explored the interrelation between dynamics of conformational and functional states of proteins by quantum phenomena.
			\vspace{-0.2cm}
			\item	Proposed a model to analyze protein behaviour by using concepts of automata theory and shown the similarity between modelling of behaviour of proteins and quantum systems.
		\end{itemize}\\
		\hline
	\end{longtable} 
\end{center}

\subsection{Closure Properties}
\begin{table} [h]
	\centering
	\caption{Closure properties of various QFA models, Here, \cmark, \xmark, - represents that particular model is closed, not closed and undefined respectively.}
	\begin{tabular}{ |p{1.8cm}|p{1.5cm}|p{1.8cm}|p{2.1cm}|p{1.2cm}|p{2cm}|p{1.5cm}|p{1.4cm}|}
		\hline
		\textbf{Authors}& \textbf{Models}& \textbf{Homomo-}
		\textbf{rphism}& \textbf{Inverse homomorphism}& \textbf{Union}& \textbf{Word quotients}& \textbf{Boolean operations}& \textbf{Comp-}
		\textbf{lement}
		\\ 
		\hline
		Brodsky et al. \cite{17} & MO-1QFA & \xmark & \cmark & \cmark & $-$ & \cmark & $-$ \\
		\hline
		Ambainis et al.\cite{16}
		&	MM-1QFA  & \xmark & \cmark & \cmark & $-$ & $-$ & \cmark \\
		\hline
		Ambainis et al. \cite{20} &	LQFA  & $-$ & \cmark & \cmark & \cmark & \cmark & \cmark \\
		\hline
		Bertoni et al. \cite{30} &	CL-1QFA   & $-$ & $-$ & $-$ & $-$ & \cmark & $-$ \\
		\hline
		Macko \cite{27} &	1.5QFA    & \xmark & $-$ & $-$ & $-$ & $-$ & $-$ \\
		\hline
		Macko \cite{27} &	2QFA    & \xmark & $-$ & $-$ & $-$ & $-$ & $-$ \\
		\hline
		Qiu \cite{33} &	2QCFA    & $-$ & $-$ & $-$ & $-$ & \cmark & \cmark \\
		\hline
	\end{tabular}
\end{table}
	
	In this subsection, we describe a comparative study of various quantum finite automata based on their closure properties. Macko \cite{27} demonstrated that the class of languages accepted by 1.5-way and two-way quantum finite automata are closed under non-erasing inverse homomorphism and inverse length non-increasing homomorphism. Qiu \cite{33} proved that 2QCFA is closed under intersection, complement and reversal operation with error probabilities. Further, it is closed under catenation operation with certain restricted condition. Table 8 illustrates a comparative study of various quantum models based on their closure properties.
	
	\subsection{Equivalence and Minimization of quantum finite automata}
	In quantum automata theory, checking the equivalence of two quantum finite automata is a challenging task. Two quantum finite automata are said to be equivalent if both accept any input string $\it w$ with the same probability. The equivalence problems inherent to MO-1QFA, MM-1QFA, and CL-1QFA have been explored by various researchers \cite{28,29,30,31,32}. Mateus et al. \cite{49} demonstrated that minimization of a given 1QFA with algebraic numbers is decidable and proposed an algorithm that takes automata as input and produce a minimal size equivalent automata.  This algorithm runs in an exponential space (EXPSPACE). 
	\begin{figure}[h]
		\centering
		\includegraphics[scale=0.35]{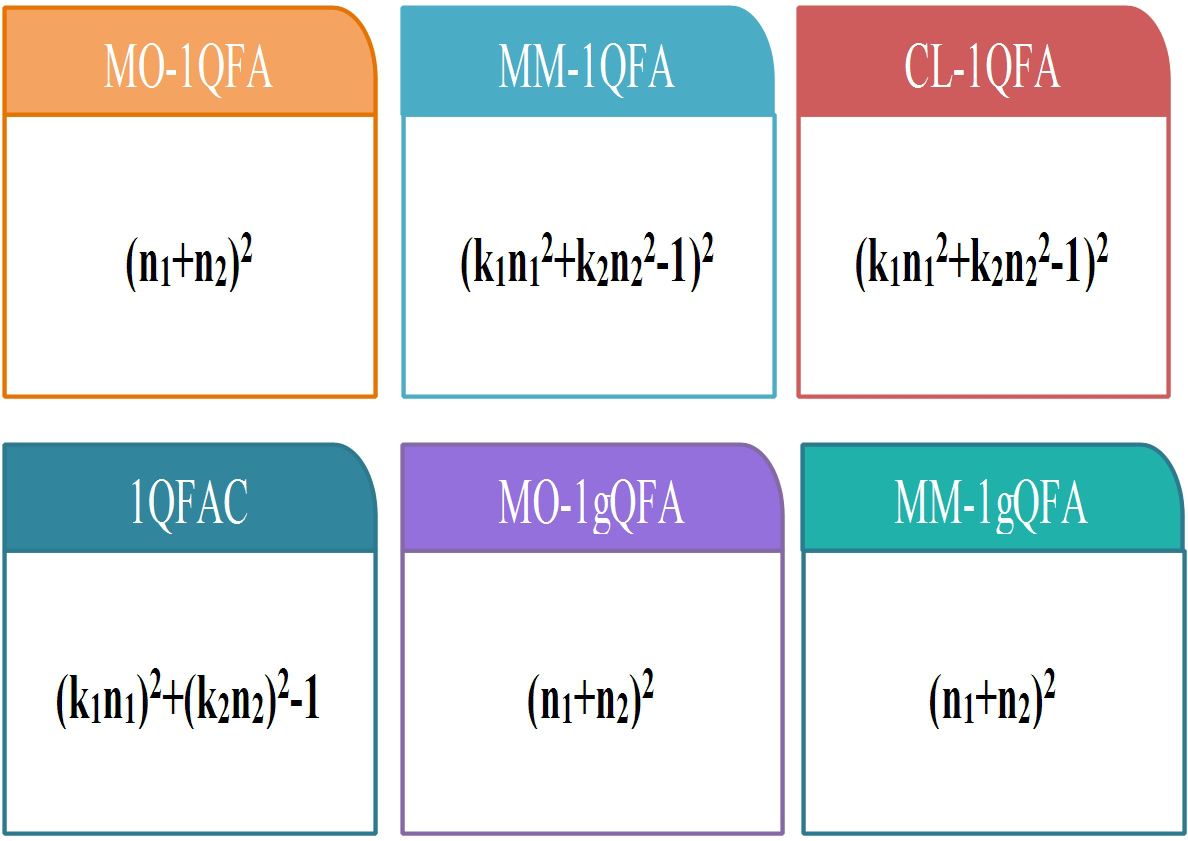}
		\caption{Summary of various quantum finite automata models}
	\end{figure}
	
	Moore et al. \cite{9} demonstrated that $\it n$-dimensional generalized QFA could be simulated by $\it n^2$-dimensional generalized stochastic automata. Koshiba \cite{28} proposed a polynomial-time algorithm for checking the equivalence of 1QFA. Li et al. \cite{29} demonstrated that MO-1QFAs $\it L$ and $\it M$, with $\it n_1$ and $\it n_2$ states respectively, are equivalent if and only if they are  $(n_1+n_2)^2$-equivalent. The state minimization of MO-1QFA is decidable in EXPSPACE \cite{49}. Bertoni et al. \cite{30} introduced CL-1QFA as a variant of the MM-1QFA model, in which a set of possible results during measurement are fixed. They demonstrated that $\it n$-state CL-1QFA could be transformed into $\it kn^2$ BLM (Bilinear machine), where $\it k$ denotes the number of states of a minimized deterministic finite automata. Li et al. \cite{31} demonstrated that BLMs $\it L_1$ and $\it L_2$, with $\it n_1$ and $\it n_2$.
	
	Due to their intricate nature, CL-1QFA and MM-1QFA cannot be converted into bilinear machines. Therefore, for solving the equivalence problem of MM-1QFA, they are transformed into CL-1QFA. Similarly, Li et al. \cite{32} demonstrated that two MM-1QFAs were equivalent if and only if they are $(k_1n_1^2+k_2n_2^2-1)$-equivalent, where $\it n_1$ and $\it n_1$ are the number of states of MM-1QFAs, and $\it k_1$ and $\it k_1$ are the number of states of minimal DFA. As an application of MM-1QFA equivalence, they demonstrated that two MM-1QFAs were equivalent if and only if they are $(3_1n_1^2+3_2n_2^2-1)$-equivalent, where 3 is the number of states of DFA to recognize a regular language $L=g^*a(a\mid r\mid g)^*$. Furthermore, there is a polynomial-time algorithm that takes input from two MM-1QFAs and decides whether they are equivalent or not in $O((3n_1^2+3n_2^2)^4)$  time \cite{32}. Using the algorithm proposed by Mateus et al. \cite{49}, the state minimization of MM-1QFA is also decidable in EXPSPACE.
	
	Qiu et al.\cite{44} proved that any two 1QFAC $\it A_1$ and $\it A_2$ are equivalent iff they are $(k_1n_1)^2+(k_2n_2)^2-1$ equivalent, where $\it k_1, k_2$ are number of classical states and $\it n_1, n_2$  are the number of quantum states of $\it A_1, A_2$ respectively by a bilinearization technique. The minimization of quantum basis states for 1QFAC is proved to be decidable in EXPSPACE \cite{49}. Further, there exists a polynomial-time $O((k_1n_1)^2+(k_2n_2)^2-1)^4$ algorithm to determine their equivalence. Li et al. \cite{43} demonstrated that two MO-1gQFA’s and MM-1gQFA’s on the same input alphabet $\Sigma$ are equivalent if and only if they are $(n_1+n_2)^2$-equivalent, where $n_1=dim(H_i)$ for $i=1,2,.$  and $n_1,n_2$ is dimension of spanned Hilbert space respectively. Similarly, it can be easily proved that the states minimization of MO-1gQFA and MM-1gQFA is decidable in EXPSPACE \cite{49}. Qiu et al. \cite{101} proved that any multi-letter QFAs $\it A_1$ and $\it A_2$ are equivalent iff they are $(n^2m^{k-1}-m^{k-1}+k)$-equivalent, where $m=|\Sigma|, n=n_1+n_2, k=max(k_1, k_2)$, $n_1 , n_2$ are the number of states of $A_1$ and $A_2$ respectively. Further, there exists a polynomial-time $O(m^{2k-1}n^8+km^kn^6)$ algorithm to determine their equivalence.

\subsection{Simulations of classical counterparts}
It is known that computation process of most classical models is either deterministic or probabilistic. Therefore, investigate the power of PFA to their quantum counterparts is a natural goal. Till now, there are several QFA models to simulate the classical automata models exactly. 
\begin{thm}
Let $\it L$ be a language accepted by MO-1QFA with cut-point $\lambda$ then there exists a PFA accepting same language with cut-point $\lambda'$.  
\end{thm}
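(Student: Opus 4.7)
The plan is to pass from the quadratic (probability-amplitude) description of the MO-1QFA to a linear (probability-like) description on an $n^2$-dimensional space, and then invoke the classical fact that generalized probabilistic automata accept exactly the stochastic languages. First I would write the acceptance probability as
\begin{equation}
p_M(w) \;=\; \bigl\lVert P_{acc}\, U_{\sigma_k}\cdots U_{\sigma_1}\ket{q_0}\bigr\rVert^2 \;=\; \mathrm{Tr}\!\left(P_{acc}\, U_w\, \rho_0\, U_w^{\dagger}\right),
\end{equation}
where $\rho_0 = \ket{q_0}\bra{q_0}$ and $U_w = U_{\sigma_k}\cdots U_{\sigma_1}$. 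Vectorising the density operator via $\mathrm{vec}(\cdot)$ and setting $M_\sigma = U_\sigma \otimes \overline{U_\sigma}$, together with $\pi = \mathrm{vec}(\rho_0)^T$ and $\eta = \mathrm{vec}(P_{acc})$, yields the bilinear expression
\begin{equation}
p_M(w) \;=\; \pi \, M_{\sigma_1} M_{\sigma_2} \cdots M_{\sigma_k}\, \eta,
\end{equation}
which is precisely the format of an $n^2$-dimensional generalized stochastic (bilinear) automaton, as already noted by Moore and Crutchfield \cite{9}.

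Next I would reduce to real entries. The map $X \mapsto U_\sigma X U_\sigma^{\dagger}$ preserves the real vector space of Hermitian $n\times n$ matrices, so after choosing an orthonormal real basis of that space (for instance the generalized Gell-Mann basis) the matrices $M_\sigma$ may be replaced by real $n^2\times n^2$ matrices and the vectors $\pi,\eta$ by real vectors. The resulting machine is a generalized probabilistic automaton $\mathcal{G}$ over $\mathbb{R}$ whose ``acceptance value'' on $w$ equals $p_M(w) \in [0,1]$, and $L = \{w : p_M(w) > \lambda\}$ is its cut-point language.

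The last step is to convert $\mathcal{G}$ into a genuine PFA. For this I would invoke Turakainen's theorem (see the classical treatment underlying the equivalence discussion cited by Brodsky and Pippenger \cite{17}): any generalized probabilistic automaton with real-valued transition matrices and initial/final vectors can be simulated by an ordinary PFA, at the cost of shifting and rescaling the cut-point from $\lambda$ to some effectively computable $\lambda'$. Concretely, one uniformly shifts all entries so that they become non-negative, adds an extra coordinate to absorb row-normalisation, and tracks how the threshold is distorted by these affine transformations; the resulting PFA accepts exactly $L$ with cut-point $\lambda'$.

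The main obstacle is the last step: exhibiting the non-negative stochastic matrices and the precise new cut-point $\lambda'$. The shift/rescale trick is essentially bookkeeping, but one must verify that the normalisation constant depends only on $\|M_\sigma\|$ and $|\Sigma|$, not on $w$, so that a single $\lambda'$ works uniformly in the input length; this is where Turakainen's argument (adjoining auxiliary ``sink'' coordinates and using the boundedness $\|M_\sigma\|\le 1$ inherited from unitarity) is indispensable and constitutes the technical core of the proof.
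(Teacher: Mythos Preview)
Your proposal is correct and is exactly the standard argument that the paper defers to via citations: the paper gives no self-contained proof but simply points to \cite{17,42,60}, whose approach is precisely bilinearisation of the acceptance probability into a generalized stochastic automaton (the Moore--Crutchfield observation) followed by Turakainen's theorem to obtain a genuine PFA with a shifted cut-point $\lambda'$. Your identification of the Turakainen normalisation step as the only nontrivial ingredient is accurate, and your caveat about uniformity in $|w|$ is exactly what Turakainen's two-extra-states construction is designed to handle.
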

The proof of the theorem is given in \cite{17,42,60}. It is also valid for probabilistic and quantum Turing machines \cite{61}. Therefore, it can be easily checked that any PFA can be converted into an equivalent QFA model.  
\begin{thm}
Let $\it L$ be a language accepted by $\it n$-state 1QFA, then there exists a $\it n^2$-state GFA accepting same language.
\end{thm}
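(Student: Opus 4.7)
The plan is to represent each pure quantum state $\ket{\psi} \in \mathbb{C}^n$ by its density matrix $\rho = \ket{\psi}\bra{\psi}$, which lives in the $n^2$-dimensional complex vector space of $n \times n$ matrices. Under the vectorization map $\mathrm{vec}$ that stacks the columns of a matrix into a column vector, the unitary evolution $\rho \mapsto U_\sigma \rho U_\sigma^\dagger$ of the 1QFA becomes a linear map on $\mathbb{C}^{n^2}$, namely multiplication by the $n^2 \times n^2$ matrix $\widetilde{U}_\sigma = \overline{U_\sigma} \otimes U_\sigma$. This is the standard correspondence that turns the bilinear action of unitaries on density matrices into a single matrix--vector product, which is exactly the format required by a GFA.

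Concretely, for the given $n$-state 1QFA $M = (Q, \Sigma, \delta, q_0, Q_{acc})$ with unitary matrices $\{U_\sigma\}_{\sigma \in \Sigma}$ and accepting projector $P_{acc}$, I would construct an $n^2$-state GFA $M'$ as follows. Take the initial vector $v_0 = \mathrm{vec}(\ket{q_0}\bra{q_0})$, assign to each input symbol $\sigma$ the transition matrix $\widetilde{U}_\sigma$, and take as the acceptance linear functional the row vector $\eta$ representing $X \mapsto \mathrm{tr}(P_{acc} X)$ in the vectorized coordinates (that is, $\eta = \mathrm{vec}(P_{acc})^T$ after a suitable reshaping). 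For any input $w = \sigma_1 \sigma_2 \cdots \sigma_k$, one checks by induction on $k$ that
\begin{equation}
\widetilde{U}_{\sigma_k} \cdots \widetilde{U}_{\sigma_1} v_0 \;=\; \mathrm{vec}\bigl(U_w \ket{q_0}\bra{q_0} U_w^\dagger\bigr),
\end{equation}
so that the GFA's output $\eta \, \widetilde{U}_{\sigma_k} \cdots \widetilde{U}_{\sigma_1} v_0 = \mathrm{tr}\bigl(P_{acc} U_w \ket{q_0}\bra{q_0} U_w^\dagger\bigr) = \lVert P_{acc} U_w \ket{q_0} \rVert^2$ coincides with the acceptance probability of $M$ on $w$. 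Consequently $M'$ accepts the same language with the same cut-point.

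The main obstacle is bookkeeping rather than conceptual: one must verify that the construction truly fits the definition of a GFA used in the paper. If the GFA is required to have real entries, I would replace the complex $n^2$-dimensional space by the $n^2$-dimensional \emph{real} vector space of Hermitian matrices (Hermitian matrices form a real subspace of real dimension $n^2$), expand $\rho$, the images $\widetilde{U}_\sigma(\rho)$, and $P_{acc}$ in an orthonormal Hermitian basis (for instance, the generalized Gell-Mann basis), and read off the real $n^2 \times n^2$ transition matrices and the real initial and acceptance vectors from those expansions. The dimension count is preserved because the unitary conjugation $\rho \mapsto U_\sigma \rho U_\sigma^\dagger$ sends Hermitian matrices to Hermitian matrices, so the real subspace is invariant. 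The inductive identity above and the trace formula for the acceptance probability then carry over verbatim, completing the simulation with exactly $n^2$ states.
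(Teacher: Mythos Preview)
Your proposal is correct and is precisely the standard bilinearization (density-matrix vectorization) argument: the paper itself gives no proof but merely cites \cite{60,32}, and those references carry out exactly the construction you describe, sending $\ket{\psi}\mapsto\ket{\psi}\bra{\psi}$, linearizing $U_\sigma(\cdot)U_\sigma^\dagger$ as $\overline{U_\sigma}\otimes U_\sigma$, and reading off the acceptance probability as the linear functional $\mathrm{tr}(P_{acc}\,\cdot)$. Your remark about restricting to the real $n^2$-dimensional space of Hermitian matrices when real entries are required is also the standard refinement.
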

\begin{proof}
The proof has been shown in \cite{60,32}.
\end{proof}
\begin{thm}
If any language $\it L$ is recognized by $\it n$-state 1QFA (pure states) with bounded error, then it can be recognized by $2^{O(n)}$-states 1DFA. 
\end{thm}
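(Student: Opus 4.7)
The plan is to build the DFA out of a discretization of the reachable quantum states, using the bounded-error isolation to show the discretization is coarse enough to have size $2^{O(n)}$ yet fine enough to preserve acceptance verdicts. Since $L$ is recognized with bounded error, there exist a cut-point $\lambda$ and an isolation gap $\varepsilon>0$ such that $p(w) := \|P_{\mathrm{acc}} U_w |\psi_0\rangle\|^2 \geq \lambda+\varepsilon$ for every $w\in L$ and $p(w)\leq \lambda-\varepsilon$ for every $w\notin L$. All reachable pure states $U_w |\psi_0\rangle$ lie on the unit sphere $S$ of $\mathbb{C}^n$, which is a real manifold of dimension $2n-1$.

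First I would establish a Lipschitz estimate: for any suffix $y$ and any two unit vectors $|\phi_1\rangle,|\phi_2\rangle$, writing $p_y(|\phi\rangle) := \|P_{\mathrm{acc}} U_y |\phi\rangle\|^2$, we have $|p_y(|\phi_1\rangle) - p_y(|\phi_2\rangle)| \leq 2\,\||\phi_1\rangle - |\phi_2\rangle\|$. This follows by expanding the difference $\langle U_y\phi_1|P_{\mathrm{acc}}|U_y\phi_1\rangle - \langle U_y\phi_2|P_{\mathrm{acc}}|U_y\phi_2\rangle$ as a sum of two inner products, applying Cauchy--Schwarz, and using that $U_y$ is unitary so $\|U_y(|\phi_1\rangle-|\phi_2\rangle)\| = \||\phi_1\rangle-|\phi_2\rangle\|$. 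The crucial consequence is that any two unit vectors within norm distance $\varepsilon/2$ yield suffix-probabilities differing by less than $\varepsilon$, strictly smaller than the $2\varepsilon$ isolation gap, so they agree on whether every suffix pushes the state above or below the cut-point.

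Next I would define an equivalence relation $\sim$ on unit vectors by $|\phi_1\rangle \sim |\phi_2\rangle$ iff for every $y\in\Sigma^*$ the probabilities $p_y(|\phi_1\rangle)$ and $p_y(|\phi_2\rangle)$ lie on the same side of $\lambda$. The Lipschitz observation gives that any two unit vectors within distance $\varepsilon/2$ are $\sim$-equivalent, so an $(\varepsilon/2)$-net on $S$ yields a system of representatives for $\sim$; standard volumetric bounds show that such a net has cardinality $(c/\varepsilon)^{2n-1} = 2^{O(n)}$, since $\varepsilon$ depends only on the 1QFA and not on $n$. Moreover, $\sim$ is preserved by every unitary $U_\sigma$ because $\sigma\Sigma^* \subseteq \Sigma^*$, so $|\phi_1\rangle \sim |\phi_2\rangle$ implies $U_\sigma |\phi_1\rangle \sim U_\sigma |\phi_2\rangle$.

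These ingredients give the DFA: its states are the $\sim$-classes that meet the reachable set, its initial state is the class of $|\psi_0\rangle$, its transition on symbol $\sigma$ sends the class of $|\phi\rangle$ to the class of $U_\sigma |\phi\rangle$ (well-defined by the preceding paragraph), and a class is accepting iff its representatives satisfy $\|P_{\mathrm{acc}}|\phi\rangle\|^2 > \lambda$. Correctness is immediate from the isolation assumption applied to the empty suffix, and the state count is $2^{O(n)}$. The main obstacle worth flagging is accumulated discretization error under iterated unitaries --- a naive ``round to the nearest net point after each step'' DFA would accumulate an error linear in $|w|$ and fail on long inputs --- and the resolution is precisely to quotient by behavioral equivalence rather than to round, so that unitarity makes the class invariance hold step-for-step instead of letting error leak over the length of the input.
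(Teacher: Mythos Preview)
Your argument is correct and is precisely the standard $\varepsilon$-net discretization of the unit sphere that the references the paper cites (\cite{10,15}) employ; the paper itself supplies no independent proof beyond that citation. One minor remark: the acceptance formula $p(w)=\|P_{\mathrm{acc}}U_w|\psi_0\rangle\|^2$ you use is the MO-1QFA one, whereas ``1QFA (pure states)'' in the statement is meant to cover MM-1QFA as well, where the configuration after a prefix is a triple $(|\phi\rangle,p_{\mathrm{acc}},p_{\mathrm{rej}})$ with $|\phi\rangle$ subnormalized---the same Lipschitz-plus-net argument goes through verbatim on that bounded region of $\mathbb{R}^{2n+2}$, so this is only a matter of stating the extension.
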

\begin{proof}
The proof of the theorem is given in \cite{10,15}.
\end{proof}
There exists a more powerful generalization of 1QFA model that allow mixed states named GQFA. Any 1DFA can be easily simulated by a GQFA
\begin{thm}
For every regular language recognized by PFA, there exists a MO-1gQFA recognizing it with certainty. 
\end{thm}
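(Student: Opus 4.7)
The plan is to give an explicit embedding of a PFA into an MO-1gQFA that preserves the acceptance probability on every input, so that not only is the language recognized but the full accepting function is reproduced with certainty. The core observation is that classical probability distributions can be encoded as diagonal density operators, and stochastic transitions can be lifted to trace-preserving completely positive maps that keep such density operators diagonal.

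Let the given PFA be $(Q,\Sigma,\{P_\sigma\}_{\sigma\in\Sigma},v_0,F)$, where $Q=\{q_1,\dots,q_n\}$, $v_0$ is an initial probability vector, each $P_\sigma$ is a column-stochastic $n\times n$ matrix, and $F\subseteq Q$ is the accepting subset. I would construct an MO-1gQFA $(H,\Sigma,\rho_0,\{\xi_\sigma\}_{\sigma\in\Sigma},P_{acc})$ as follows. Take $H$ to be an $n$-dimensional Hilbert space with orthonormal basis $\{\ket{q_1},\dots,\ket{q_n}\}$. Set the initial density operator to $\rho_0=\sum_i (v_0)_i\ket{q_i}\bra{q_i}$, which is a legitimate state (positive, trace one). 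For each $\sigma\in\Sigma$ define Kraus operators $E^\sigma_{ij}=\sqrt{(P_\sigma)_{ij}}\,\ket{q_i}\bra{q_j}$ for all $i,j$, and set $\xi_\sigma(\rho)=\sum_{i,j}E^\sigma_{ij}\,\rho\,(E^\sigma_{ij})^\dagger$. Finally, let the accepting projector be $P_{acc}=\sum_{q\in F}\ket{q}\bra{q}$.

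The key steps to verify are then: (i) each $\xi_\sigma$ is a trace-preserving quantum operation; (ii) $\xi_\sigma$ maps diagonal density operators to diagonal density operators, with the diagonal evolving exactly by the stochastic matrix $P_\sigma$; (iii) the MO-1gQFA acceptance function agrees with the PFA acceptance function. For (i), note that $\sum_{i,j}(E^\sigma_{ij})^\dagger E^\sigma_{ij}=\sum_{i,j}(P_\sigma)_{ij}\ket{q_j}\bra{q_j}=\sum_j\bigl(\sum_i (P_\sigma)_{ij}\bigr)\ket{q_j}\bra{q_j}=I$, using column-stochasticity. For (ii), a direct computation gives $\xi_\sigma\bigl(\sum_k p_k\ket{q_k}\bra{q_k}\bigr)=\sum_i\bigl(\sum_k (P_\sigma)_{ik}p_k\bigr)\ket{q_i}\bra{q_i}$, which is the density operator encoding $P_\sigma v$. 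Iterating along the input $x=\sigma_1\cdots\sigma_m$ yields $\xi_{\sigma_m}\circ\cdots\circ\xi_{\sigma_1}(\rho_0)=\sum_i\bigl(P_{\sigma_m}\cdots P_{\sigma_1}v_0\bigr)_i\ket{q_i}\bra{q_i}$. Then, by the MO-1gQFA acceptance formula stated earlier,
\[
f_{\text{MO-1gQFA}}(x)=\operatorname{Tr}\bigl(P_{acc}\,\xi_{\sigma_m}\!\circ\cdots\circ\xi_{\sigma_1}(\rho_0)\bigr)=\sum_{q_i\in F}(P_{\sigma_m}\cdots P_{\sigma_1}v_0)_i,
\]
which is exactly the PFA acceptance probability of $x$; hence the same language is recognized with the same cut-point and in particular ``with certainty'' in the sense that the acceptance probability is reproduced exactly.

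The only genuine obstacle is the verification that the maps $\xi_\sigma$ fall within the class admitted by the MO-1gQFA definition, namely completely positive and trace-preserving; complete positivity is immediate from the Kraus form, and trace-preservation is the stochasticity calculation above, so the difficulty is more notational than conceptual. A minor convention check is also needed: if the PFA uses row-stochastic matrices acting on row vectors, one simply transposes the indices in the Kraus operators so that the column-sum condition used in the trace-preservation step holds. Once these points are settled the theorem follows, and as a by-product one recovers the claim from the preceding discussion that MO-1gQFA can simulate DFA, since a DFA is the special case where every $P_\sigma$ is a $0/1$ permutation-like stochastic matrix.
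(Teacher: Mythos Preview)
Your Kraus-operator construction is correct and is precisely the simulation from Hirvensalo and Li et al.\ that the paper cites, so in that sense you have supplied the details the paper omits. One point of ordering deserves attention, though: the paper's route is \emph{regular language $\to$ DFA $\to$ MO-1gQFA}, and ``with certainty'' is meant in the standard sense of acceptance probability exactly $1$ on $L$ and $0$ off $L$. Your main argument instead simulates the given PFA itself and concludes that the MO-1gQFA reproduces its acceptance probabilities exactly; that is a stronger simulation statement, but it does not by itself yield certainty if the original PFA only recognized $L$ with a cut-point or bounded error. The piece that actually proves the theorem as stated is your final remark---that a DFA is a special PFA with $0/1$ stochastic matrices, so applying your construction to a DFA for $L$ gives $0/1$ acceptance. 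I would promote that observation from a by-product to the main line of the argument, and present the general PFA simulation as the broader fact it really is.
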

\begin{proof}
DFA is a special probabilistic automata which can be simulated exactly by an MO-1gQFA. The proof of the simulation process is shown in \cite{42}. However, the detailed proof is given by Li et al.\cite{43}. 
\end{proof}
\begin{thm}
If any language $\it L$ is recognized by $\it n$-state 1QFA (mixed states) with bounded error, then it can be recognized by $2^{O(n^2)}$-states 1DFA. 
\end{thm}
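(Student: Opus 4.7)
The key observation is that the internal state of a mixed-state $1$QFA on an $n$-dimensional Hilbert space is a density operator, which lives in a bounded convex subset of the $(n^2{-}1)$-dimensional real affine space of unit-trace Hermitian matrices. Thus the ambient dimension is $n^2$ rather than $n$ as in the pure-state case of Theorem 4, and this is precisely what forces the exponent to jump from $O(n)$ to $O(n^2)$. My plan is to discretize this state space and use the cells of the discretization as states of a DFA.

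First I would set up the computation: for input $w = \sigma_1\sigma_2\cdots\sigma_k$, after reading a prefix $u$ the state is $\rho_u = (\mathcal{E}_{\sigma_{|u|}}\circ\cdots\circ\mathcal{E}_{\sigma_1})(\rho_0)$, where each $\mathcal{E}_\sigma$ is the trace-preserving quantum operation associated with symbol $\sigma$, and the acceptance probability is $\mathrm{Tr}(P_{\mathrm{acc}}\rho_w)$. Bounded-error recognition with gap $2\varepsilon$ means this quantity is $\geq\lambda+\varepsilon$ on inputs in $L$ and $\leq\lambda-\varepsilon$ on inputs outside $L$. For the measure-many variant one would also track the accumulated halting probabilities as in Definition 9; this adds only two bounded real parameters and does not change the asymptotic bound.

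Next I would build the DFA by taking a $\delta$-net (in the trace norm) of the set of density operators on $\mathbb{C}^{n}$. A standard volume argument yields a net of cardinality $(C/\delta)^{O(n^2)}$, since the density matrices form a bounded convex body of real dimension $n^2{-}1$. Fixing $\delta$ proportional to $\varepsilon$, I would declare one DFA state per net point, set the start state to the net point closest to $\rho_0$, define the transition on $\sigma$ from a net point $\tilde\rho$ to be the net point closest to $\mathcal{E}_\sigma(\tilde\rho)$, and mark $\tilde\rho$ as accepting iff $\mathrm{Tr}(P_{\mathrm{acc}}\tilde\rho)\geq\lambda$. The state count is then $(C/\varepsilon)^{O(n^2)} = 2^{O(n^2)}$, with $\varepsilon$ absorbed into the $O$-constant.

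The main obstacle is controlling the error introduced by replacing the true state by its nearest net point at every step: naively these errors could accumulate linearly in the length of the input. What rescues the argument is that every trace-preserving quantum operation is a contraction under the trace norm, $\|\mathcal{E}(\rho)-\mathcal{E}(\rho')\|_1\leq\|\rho-\rho'\|_1$, together with the Lipschitz estimate $|\mathrm{Tr}(P_{\mathrm{acc}}\rho)-\mathrm{Tr}(P_{\mathrm{acc}}\rho')|\leq\|\rho-\rho'\|_1$. With a slightly more careful clustering argument, so that at each step the round-off error is absorbed rather than compounded, choosing $\delta$ sufficiently small compared to $\varepsilon$ keeps the tracked net point within trace distance $\varepsilon/2$ of the true $\rho_u$ uniformly in $|u|$, which suffices to reproduce the correct accept/reject classification on every input. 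Making this stability argument uniform in input length, while keeping $\delta$ an absolute constant depending only on $\varepsilon$, is the technical heart of the proof.
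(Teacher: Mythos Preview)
The paper does not actually give a proof of this theorem; it merely cites reference~[51] (the Ambainis--Yakaryilmaz survey), so the comparison is against the standard argument found there.

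Your construction has a genuine gap in exactly the place you flag as ``the technical heart of the proof.'' You build the DFA by rounding to the nearest net point after every symbol, and you hope that the contraction property of CPTP maps keeps the accumulated round-off bounded. But trace-preserving quantum operations are only \emph{non-expansive} in trace norm, with Lipschitz constant equal to $1$, not strictly less than $1$. Each rounding step injects a fresh error of up to $\delta$, and applying $\mathcal{E}_\sigma$ does not shrink the error already present; it merely fails to enlarge it. After $k$ symbols the discrepancy between your tracked net point and the true $\rho_u$ can be as large as $k\delta$, and no choice of $\delta>0$ independent of the input length prevents this from exceeding $\varepsilon$. The promised ``slightly more careful clustering argument, so that at each step the round-off error is absorbed rather than compounded'' does not exist for non-strictly-contractive dynamics; this is the same obstruction that makes naive finite-precision simulation of a rotation on the circle drift without bound.

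The argument that actually works avoids step-by-step rounding altogether and goes through the Myhill--Nerode theorem. If $\|\rho_u-\rho_v\|_1<\varepsilon$, then for \emph{every} suffix $w$ one has
\[
\bigl|\mathrm{Tr}\bigl(P_{\mathrm{acc}}\,\mathcal{E}_w(\rho_u)\bigr)-\mathrm{Tr}\bigl(P_{\mathrm{acc}}\,\mathcal{E}_w(\rho_v)\bigr)\bigr|\le\|\mathcal{E}_w(\rho_u)-\mathcal{E}_w(\rho_v)\|_1\le\|\rho_u-\rho_v\|_1<\varepsilon,
\]
using non-expansiveness just once, forward, with no rounding in between. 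Hence $uw\in L\iff vw\in L$ by the bounded-error gap, so $u$ and $v$ are Myhill--Nerode equivalent. The number of equivalence classes is then bounded by the $\varepsilon$-packing number of the set of $n\times n$ density matrices, which is $2^{O(n^2)}$ by the volume argument you already sketched. Your dimension count and net-size estimate are correct; what needs to change is that the net is used to bound the index of $L$, not to define a transition table.
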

\begin{proof}
Simulation process of 1DFA by MO-1gQFA has been shown in various papers. The simulation by considering upper bound on the number of states is given in \cite{51}. 
\end{proof}

\section{Some Open Problems}
Based on literature survey, we would like to suggest some open problems for further consideration \cite{9,10,11,12,16,43,44,45,46,47}. 
\begin{longtable}{p{14cm}}
	\hline
	\centering \textbf{Open problems of 1QFA}
	\label{xyz}
	\endfirsthead
	\endhead
	\hline
	\textbullet	~To characterize the class of languages accepted by MM-1QFA with bounded error acceptance. \\
	\textbullet	~Investigate the relation between the state complexities of multi-letter QFAs with MO-1QFAs for unary regular languages.\\
	\textbullet	~To determine the power of MO-1QFA with unbounded error acceptance mode. \\
	\textbullet	~To determine a regular language that cannot be accepted by 1QFA but can be accepted by quantum counter automata. \\
	\textbullet	~To determine whether the class of languages recognized by MM-1QFA are closed under Boolean operations or not. \\
	\hline
	 \multicolumn{1}{c}{\textbf{Open problems of 1.5 QFA} }\\
	\hline 
	\textbullet	~To determine whether 1.5 QFA can recognize regular languages with bounded error $~ ~ ~ ~$ acceptance. \\
	\textbullet	~To carry state complexity comparative study of 1QFA and 1.5 QFA. \\
	\textbullet	~To determine the various closure properties of 1.5 QFA. \\
	\hline
 \multicolumn{1}{c}{\textbf{Open problems of 2QFA} }\\
	\hline 
	\textbullet	~To investigate a language for which 2QA with bounded error takes polynomial time, whereas 2PFA with a bounded error of the same languages take exponential time. \\
	\textbullet ~To determine the decidability of 2QFA and 1.5 QFA equivalence. \\
	\textbullet ~To determine whether there exist a non-stochastic language which can be represented by 2QFA but not by 1.5-way QFA? \\
	\textbullet	~To find out whether 2QFA can accept any non-stochastic languages with bounded error mode. \\
	\textbullet	To determine the various closure properties of 2QFA. \\
	\textbullet ~To determine whether 2QFA is more powerful than corresponding classical automata if it is restricted to a particular measurement after a specified time.\\
	\hline
	 \multicolumn{1}{c}{\textbf{Miscellaneous models}}\\
	\hline 
	\textbullet	~To carry state complexity of 1QFAC and carry out comparison with 1QFA models. \\
	\textbullet	~To determine the simulation of 1QFAC by 1QFA models. \\
	\textbullet	~To investigate whether GQFA can recognize strictly more languages than MM-1QFA with an unbounded error. \\
	\textbullet	~To determine whether any two 2QCFA are equivalent. Does there exist a polynomial-time algorithm to determine equivalence between them?\\
	\hline
	\multicolumn{1}{c}{\textbf{Other promising research directions}}\\
	\hline 
	\textbullet	~To study the computational power of QFA with
	algebraic methods. \\
	\textbullet	~To investigate the power of other QFA models with advice from computational complexity point of view.  \\
	\textbullet	~To explore more languages for which QFA can be designed with less number of states as compared to PFA and DFA. \\
	\textbullet	~To determine more promise problems to show separations between computational models.\\
	\hline
\end{longtable}

\section{Conclusion}

A quantum finite automaton is a theoretical model with finite memory which lay down the vision of quantum processor. There are various quantum automata models which are more powerful than classical ones. In this paper, we have comprehensively reviewed and analyzed various aspects of quantum finite automata based on past research literature. It helps to understand theoretically the fundamentals of various quantum computing models. We have outlined their definitions, behavior, closure properties, language recognition power, comparison, inclusion relationships, equivalence criteria, minimization and simulation results. We have subsequently summarized the literature published to date in the form of a systematic evolution of QFA models. In this review, we have compared various quantum finite automata models. This study makes a positive contribution to the growing body of quantum computing literature by exploring the computational power of various QFA models. Moreover, we have recognized and addressed various issues present in the research and identified some outstanding research questions still unresolved in various QFA models. Furthermore, various open problems are identified as a future area of research in the field of quantum automata theory. 

\begin{longtable}{p{2.5cm} p{12.3cm}}
\hline
\textbf{Appendix 1:} & \textbf{Acronyms}\\ 
\hline
QFT	& Quantum Fourier Transform\\
1QFA &		One-way Quantum finite automata\\
DFA	& Deterministic finite automata\\
PFA	 & Probabilistic finite automata\\
2PFA &	Two-way probabilistic finite automata\\
MO-1QFA	& Measure-once quantum finite automata\\
MM-1QFA	& Measure-many quantum finite automata\\
LQFA	& Latvian quantum finite automata\\
CL-1QFA	& Quantum finite automata with control languages\\
1QFAC	& One-way finite automata with quantum and classical states.\\
GQFA	& General quantum finite automata\\
MO-1gQFA & Measure-once general quantum finite automata\\
MM-1gQFA  &  Measure-many general quantum finite automata\\
QCPA	& Quantum pushdown automata with the classical stack\\
1.5 QFA   &   1.5 quantum finite automata\\
2QFA	& Two-way quantum finite automata\\
2QCFA	& Two-way finite automata with quantum and classical states\\
rtQ1BCA	& Real-time quantum automata with one blind counter\\
1DkBCA	 & One-way deterministic automata with k blind counter\\
1D1BCA	& One-way deterministic automata with one  blind counter\\
1Pk1BCA	& One-way probabilistic automata with k blind counter\\
1P1BCA	&	One-way probabilistic automata with one blind counter\\
2TFA 	& Two-tape finite automata\\
kTQCFA	 & $\it k$-tape automata with quantum and classical states\\ 
2TQCFA	& Two-way two tape finite automata with quantum and classical states\\
PAFA	& Real-time private alternating finite automata\\
1Q1CA	& One-way quantum one-counter automata\\
2-1DFA	& One-way deterministic two-head finite automata\\
2-2T1QFA &	Two-tape one-way quantum finite automata with two-head\\
QFSM &	Quantum finite-state machine\\

\hline

\end{longtable}

\section*{Acknowledgments}
Amandeep Singh Bhatia was supported by Maulana Azad National Fellowship (MANF), funded by Ministry of Minority Affairs, Government of India. 

\label{sec:test}
\bibliographystyle{elsarticle-num}
\bibliography{sample}

\end{document}